\newtheorem{theorem}{Theorem}
\newtheorem{remark}{Remark}
\newtheorem{proposition}[theorem]{Proposition}
\newtheorem{conjecture}{Conjecture}
\newtheorem{corollary}[theorem]{Corollary}
\newtheorem*{mainresult}{Main result}
\newcommand{\C}{\mathbb{C}}
\newcommand{\Z}{\mathbb{Z}}
\newcommand{\R}{\mathbb{R}}
\newcommand{\ler}[1]{\left( #1 \right)}
\newcommand{\lesq}[1]{\left[ #1 \right]}
\newcommand{\lers}[1]{\left\{ #1 \right\}}
\newcommand{\hohc}{\cH \otimes \cH^*}
\newcommand{\abs}[1]{\left| #1 \right|}
\newcommand{\norm}[1]{\left|\left|#1\right|\right|}
\newcommand{\bra}[1]{\langle #1 |}
\newcommand{\ket}[1]{| #1 \rangle}
\newcommand{\Bra}[1]{\langle\langle #1 ||}
\newcommand{\Ket}[1]{|| #1 \rangle\rangle}
\newcommand{\be}{\begin{equation}}
\newcommand{\ee}{\end{equation}}
\newcommand{\ba}{\begin{array}}
\newcommand{\ea}{\end{array}}
\newcommand{\fel}{\frac{1}{2}}
\newcommand{\cB}{\mathcal{B}}
\newcommand{\cH}{\mathcal{H}}
\newcommand{\cP}{\mathcal{P}}
\newcommand{\cT}{\mathcal{T}}
\newcommand{\cS}{\mathcal{S}}
\newcommand{\cC}{\mathcal{C}}
\newcommand{\cW}{\mathcal{W}}
\newcommand{\cL}{\mathcal{L}}
\newcommand{\bb}{\mathbf{b}}
\newcommand{\tr}{\mathrm{tr}}
\newcommand{\dd}{\mathrm{d}}
\newcommand{\cba}{\mathcal{A}}
\newcommand{\cA}{\mathcal{A}}
\newcommand{\dwts}[2]{d_{\cW_2}^2(#1,#2)}
\newcommand{\E}{\mathbb{E}}
\newcommand{\law}{\mathrm{law}}
\newcommand\lh{\cL(\cH)}
\newcommand{\sh}{\cS\ler{\cH}}
\title[On the metric property of quantum Wasserstein divergences]{On the metric property of quantum Wasserstein divergences}
\author[Gergely Bunth]{Gergely Bunth}
\address{Gergely Bunth, HUN-REN Alfr\'ed R\'enyi Institute of Mathematics\\ Re\'altanoda u. 13-15.\\Budapest H-1053\\ Hungary}
\email{bunth.gergely@renyi.hu}
\author[J\'ozsef Pitrik]{J\'ozsef Pitrik}
\address{J\'ozsef Pitrik, HUN-REN Wigner Research Centre for Physics\\ Budapest H-1525, Hungary\\ and HUN-REN Alfr\'ed R\'enyi Institute of Mathematics\\ Re\'altanoda u. 13-15.\\ Budapest H-1053\\ Hungary\\ and Department of Analysis, Institute of Mathematics \\Budapest University of Technology and Economics\\ M\H{u}egyetem rkp. 3. \\ Budapest H-1111\\ Hungary}
\email{pitrik.jozsef@renyi.hu}
\author[Tam\'as Titkos]{Tam\'as Titkos}
\address{Tam\'as Titkos, Corvinus University of Budapest\\ Department of Mathematics\\ Fővám tér 13-15.\\ Budapest H-1093\\Hungary\\ 
and \\ HUN-REN Alfr\'ed R\'enyi Institute of Mathematics\\ Re\'altanoda u. 13-15.\\ Budapest H-1053\\Hungary}
\email{titkos.tamas@renyi.hu}
\author[D\'aniel Virosztek]{D\'aniel Virosztek}
\address{D\'aniel Virosztek, HUN-REN Alfr\'ed R\'enyi Institute of Mathematics\\ Re\'altanoda u. 13-15.\\Budapest H-1053\\ Hungary}
\email{virosztek.daniel@renyi.hu}
\date{}
\subjclass[2020]{Primary: 49Q22; 81P16. Secondary: 81Q10.}
\keywords{quantum optimal transport, metric property}
\thanks{Bunth was supported by the Momentum Program of the Hungarian Academy of Sciences (grant no. LP2021-15/2021); Pitrik was supported by the “Frontline” Research Excellence Programme of the Hungarian National Research, Development and Innovation Office - NKFIH (grant no. KKP133827) and by the Momentum Program of the Hungarian Academy of Sciences (grant no. LP2021-15/2021); Titkos was supported by the Hungarian National Research, Development and Innovation Office - NKFIH (grant no. K115383) and by the Momentum Program of the Hungarian Academy of Sciences (grant no. LP2021-15/2021).; Virosztek was supported by the Momentum Program of the Hungarian Academy of Sciences (grant no. LP2021-15/2021) and by the ERC Consolidator Grant no. 772466.}
\begin{document}

\maketitle

\begin{abstract}
Quantum Wasserstein divergences are modified versions of quantum Wasserstein distances defined by channels, and they are conjectured to be genuine metrics on quantum state spaces by De Palma and Trevisan. We prove triangle inequality for quantum Wasserstein divergences for every quantum system described by a separable Hilbert space and any quadratic cost operator under the assumption that a particular state involved is pure, and all the states have finite energy. We also provide strong numerical evidence suggesting that the triangle inequality holds in general, for an arbitrary choice of states.
\end{abstract}

\tableofcontents

\section{Introduction}\label{s:intro}

\subsection{Motivation and main result}\label{ss:mot-res}

Although the classical problem of transporting mass in an optimal way was formulated in the 18th century by Monge, the theory of classical optimal transport (OT) became one of the central topics of analysis only in the last few decades with intimate links to mathematical physics \cite{DL-DP-M-T,JKO-98,Lott}, PDE theory \cite{Ambrosio-lecturenotes,Figalli-note,OPS}, and probability \cite{BP,BiegelbockJuillet,Hairer2}. Not to mention the countless applications in artificial intelligence, image processing, and many other fields of applied sciences. See e.g. \cite{Peyre1,Peyre2,Santambrogio} and the references therein.
\par
The quantum counterpart of classical OT is just emerging in these years. As always, the correspondence between the classical world and the quantum world is not one-to-one. Non-commutative optimal transport is a flourishing research field these days with several different promising approaches such as that of Biane and Voiculescu \cite{BianeVoiculescu}, Carlen, Maas, Datta, and Rouzé  \cite{CarlenMaas-1,CarlenMaas-2,CarlenMaas-3,DattaRouze1,DattaRouze2}, Caglioti, Golse, Mouhot, and Paul \cite{CagliotiGolsePaul, CagliotiGolsePaul-towardsqot, GolseMouhotPaul, GolsePaul-wavepackets, GolsePaul-Schrodinger,GolsePaul-meanfieldlimit}, De Palma and Trevisan \cite{DePalmaMarvianTrevisanLloyd,DPT-AHP},  \.Zyczkowski and his collaborators  \cite{FriedmanEcksteinColeZyczkowski-MK,ZyczkowskiSlomczynski1,ZyczkowskiSlomczynski2,BistronEcksteinZyczkowski}, and Duvenhage \cite{Duvenhage1, Duvenhage2}. Separable quantum Wasserstein distances have also been introduced recently \cite{TothPitrik}.
From our viewpoint, the most relevant approach is the one of De Palma and Trevisan involving quantum channels, which is closely related to the quantum optimal transport concept of Caglioti, Golse, Mouhot, and Paul based on quantum couplings.
\par
It is a common feature of both the channel-based and the coupling-based quantum optimal transport distances that \emph{they are not genuine metrics} --- in particular, states may have a positive distance from themselves. This phenomenon is natural on the one hand, considering the nature of quantum mechanics, but has counter-intuitive consequences on the other hand. One of these consequences is that there exist non-surjective and even non-injective quantum Wasserstein isometries (i.e., distance preserving maps) on the compact space of states of a finite-level quantum system \cite{GPTV23} --- none of these could possibly happen in a genuine metric setting. 
\par
As a response to this phenomenon, De Palma and Trevisan introduced \emph{quantum Wasserstein divergences} \cite{DPT-lecture-notes} which are appropriately modified quantum Wasserstein distances --- see \eqref{eq:mod-qw-dist-def} for a precise definition. They conjectured that quantum Wasserstein divergences are genuine metrics on quantum state spaces. This paper is devoted to the question whether the triangle inequality holds for these divergences. We formulate our main result in an informal way below --- see Theorem \ref{thm:main} for the precise statement.

\begin{mainresult}
For every quantum system described by a separable Hilbert space $\cH,$ and for every finite collection $\cA$ of observable quantities, the corresponding quadratic quantum Wasserstein divergence $d_{\cA}$ satisfies
\be \label{eq:tri-in-plain}
d_{\cA}(\rho, \tau) \leq d_{\cA}(\rho, \omega)+d_{\cA}(\omega, \tau)
\ee
for any triplet $(\rho, \omega, \tau)$ of states assuming that $\omega$ is pure or both $\rho$ and $\tau$ are pure, and all the states involved are of finite energy. Moreover, numerical results strongly indicate that \eqref{eq:tri-in-plain} holds for any triplet $(\rho, \omega, \tau)$ of states without any further assumptions.
\end{mainresult}

This paper is organized as follows. In the next subsection (Subsec. \ref{ss:not-not}) we introduce all the necessary notions and notation. Section \ref{s:main-proof} is dedicated to the analytical proof of Theorem \ref{thm:main}, which is a precise formulation of the above main result. In Section \ref{s:numerics} we present numerical evidence suggesting that the conclusion of Theorem \ref{thm:main} holds without any extra assumptions on the states involved. Encouraged by the numerics presented in Section \ref{s:numerics}, we make steps in the direction of an analytic proof of the triangle inequality in full generality --- this material is presented in Section \ref{s:anal-spec}.
\par
Finally, we remark the analogy of our present problem with the problem of the metric property of the \emph{quantum Jensen-Shannon divergence}. In that case, the triangle inequality was proved in 2008 for pure states analytically and for mixed states numerically \cite{lmbcp-pra-77-2008} (see also \cite{briet-harremoes}). A decade later, the problem was settled by an analytical proof in full generality --- see \cite{V2021metric-JS} and \cite{sra-laa}.

\subsection{Basic notions, notation}\label{ss:not-not}

The {\it classical OT problem} is to arrange the transportation of goods from producers to consumers in an optimal way, given the distribution of production and consumption (described by probability measures $\mu$ and $\nu$), and the cost $c(x,y)$ of transporting a unit of goods from $x$ to $y.$ Accordingly, a {\it transport plan} is modeled by a probability distribution $\pi$ on the product of the initial and the target space, where $\dd \pi(x,y)$ is the amount of goods to be transferred from $x$ to $y,$ and hence the marginals of $\pi$ are $\mu$ and $\nu.$ So the \emph{optimal transport cost} is the infimum of a convex optimization problem with linear loss function:
\be \label{eq:class-ot-cost-def}
\mathrm{Cost}\ler{\mu,\nu,c}= \inf \lers{ \iint_{X \times Y} c(x,y) \dd \pi\ler{x,y} \, \middle| \, (\pi)_1=\mu, \, (\pi)_2=\nu}
\ee
where $(\pi)_i$ denotes the $i$th marginal of $\pi,$ and $X$ is the initial and $Y$ is the target space. Under mild continuity assumptions on $c(x,y)$ --- lower semi-continuity does the job --- the infimum in \eqref{eq:class-ot-cost-def} is actually a minimum as it is realized by a transport plan. These minimizing plans are called \emph{optimal transport plans.}
\par
One has great freedom in choosing the cost function $c(x,y).$ However, the \emph{quadratic cost} $c(x,y)=r^2(x,y),$ which is simply the square of the distance, plays a distinguished role. Its importance comes mainly from fluid dynamics and the dynamical theory of optimal transportation --- the exponent $p=2$ is distinguished by the fact that the kinetic energy is proportional to the \emph{square} of the velocity.
Accordingly, the relevance of quadratic Wasserstein spaces has grown dramatically in recent decades due to their close connection with PDE theory and gradient flows. Recall that if $(X,r)$ is complete and separable metric space, then the classical quadratic Wasserstein space $\mathcal{W}_2(X)$ is the collection of those probability measures on the Borel $\sigma$-algebra $\mathcal{B}$ that satisfy 
$\int_X r(x,x_0)^2~\mathrm{d}\mu(x)<\infty$ for some $x_0\in X$,
endowed with the quadratic Wasserstein distance
\begin{equation} \label{eq:dw2-def}
\dwts{\mu}{\nu}:=\inf_{\pi} \int_{X \times X} r^2(x,y)~\mathrm{d} \pi(x,y)
\end{equation}
where the infimum runs over all couplings of $\mu$ and $\nu.$
\par

In classical mechanics, the state of a particle moving in $\R^d$ is described by a probability measure $\mu$ on the phase space $\R^d \times \R^d$ which is the collection of all possible values of the position and momentum variables $q, p \in \R^d.$ In this concrete setting, the quadratic Wasserstein distance \eqref{eq:dw2-def} of the classical states $\mu, \nu \in \cP\ler{\R^{2d}}$ is given by
\be \label{eq:dw2-cl-mech}
\dwts{\mu}{\nu}=\inf_{\pi} \lers{ \iint_{\R^{2d}\times \R^{2d}} \abs{(q_1,p_1)-(q_2,p_2)}^2 \dd \pi\ler{(q_1,p_1),(q_2,p_2)}}
\ee
where $\pi \in \cP(\R^{2d} \times \R^{2d})$ and $(\pi)_1=\mu, \, (\pi)_2=\nu$ --- here $(\pi)_i$ denotes the $i$th marginal of $\pi.$
Recall that Wasserstein distances admit a picturesque probabilistic interpretation as they are defined by optimization over couplings of probability measures. Let us stick to the concrete case of $2$-Wasserstein distances between states of classical mechanical systems. In this case, the probabilistic version of \eqref{eq:dw2-cl-mech} reads as follows:
\be \label{eq:def-clm}
\dwts{\mu}{\nu}
=\inf \lers{ \E \abs{(Q_1,P_1)-(Q_2,P_2)}^2 \, \middle| \, \law(Q_1,P_1)=\mu, \, \law(Q_2,P_2)=\nu}.
\ee
In the above formula, the random variables $Q_j$ and $P_j$ represent the position and the momentum of the $j$th particle ($j=1,2$). That is, we minimize the sum of the expected squared differences between the positions of the two particles and the momenta of them.
\par

In quantum mechanics, the state of a particle moving in $\R^d$ is described by a wave function $\psi \in L^2\ler{\R^d}$ of unit norm, or more generally, by a normalized, positive, trace-class operator $\rho$ on $\cH=L^2\ler{\R^d}.$ Measurable physical quantities correspond to (possibly unbounded) self-adjoint operators on $\cH=L^2\ler{\R^d}.$ The spectrum of such an operator is precisely the collection of all possible outcomes of a quantum measurement. In the sequel, we denote by $\lh^{sa}$ the set of self-adjoint but not necessarily bounded operators on $\cH$, and $\sh$ stands for the set of states, that is, the set of positive trace-class operators on $\cH$ with unit trace. The space of all bounded operators on $\cH$ is denoted by $\cB(\cH),$ and we recall that the collection of trace-class operators on $\cH$ is denoted by $\cT_1(\cH)$ and defied by $\cT_1(\cH)= \lers{X \in \cB(\cH) \, \middle| \, \tr_{\cH}[\sqrt{X^*X}] < \infty}.$ Similarly, $\cT_2(\cH)$ stands for the set of Hilbert-Schmidt operators defined by $\cT_2(\cH)= \lers{X \in \cB(\cH) \, \middle| \, \tr_{\cH}[X^*X] < \infty}.$ When measuring an observable quantity $A \in \lh^{sa}$ on a quantum system being in the state $\rho \in \sh,$ the probability of the outcome lying in an interval $[a,b]\subset \R$ is $\tr_{\cH} \ler{\rho E_A\ler{[a,b]}},$ where $E_A$ is the spectral measure of $A.$
Consequently, a quantum state encapsulates a bunch of classical probability distributions, each corresponding to a physical quantity we are interested in. 
\par
Let us single out a few observable quantities $A_1, \dots, A_k \in \lh^{sa}$ we are interested in, and let $X_{j}^{(\rho)}$ denote the random variable obtained by measuring $A_j$ in the initial state $\rho,$ and let $X_{j}^{(\omega)}$ denote the random variable obtained by measuring $A_j$ in the final state $\omega.$ According to \eqref{eq:def-clm}, the squared OT distance of the quantum states $\rho, \omega \in \sh$ should read as
\be \label{eq:def-qot-1}
D^2\ler{\rho, \omega}=\inf\lers{\sum_{j=1}^k \E \ler{X_{j}^{(\rho)}-X_{j}^{(\omega)}}^2 }
\ee
where the infimum is taken over all possible couplings of the quantum states $\rho$ and $\omega.$ According to the convention introduced by De Palma and Trevisan \cite{DPT-AHP}, the set of all couplings of the quantum states $\rho, \omega \in \cS\ler{\cH}$ is denoted by $\cC\ler{\rho, \omega},$ and is given by
\be \label{eq:q-coup-def}
\cC\ler{\rho, \omega}=\lers{\Pi \in \cS\ler{\cH \otimes \cH^*} \, \middle| \, \tr_{\cH^*} [\Pi]=\omega, \,  \tr_{\cH} [\Pi]=\rho^T},
\ee
where the \emph{transpose} $A^T$ of a linear operator $A$ acting on $\cH$ is a linear operator on the dual space $\cH^*$ defined by the identity $(A^T \eta) (\varphi) \equiv \eta (A \varphi)$ where $\eta \in \cH^*$ and $\varphi \in \mathrm{dom}(A).$
That is, a coupling of $\rho$ and $\omega$ is a state $\Pi$ on $\hohc$ such that 
$$ 
\tr_{\hohc}[\ler{A\otimes I_{\cH^*}} \Pi]=\tr_{\cH} [\omega A]
$$
and
\be \label{eq:part-trace-def}
\tr_{\hohc}\lesq{\ler{I_{\cH} \otimes B^{T}} \Pi}=\tr_{\cH^*} [\rho^T B^T]=\tr_{\cH} [\rho B]
\ee
for all bounded $A, B \in \cL(\cH)^{sa}.$ Note the clear analogy of the above definition of quantum couplings with the definition of classical couplings that can be rephrased as follows: $\pi \in \cP(X^2)$ is a coupling of $\mu \in \cP(X)$ and $\nu \in \cP(X)$ if 
$$
\iint_{X^2} f(x) \dd \pi (x,y)=\int_{X} f(x) \dd \mu(x) \text{ and } \iint_{X^2} g(y) \dd \pi (x,y)=\int_{X} g(y) \dd \nu(y)
$$
for every continuous and bounded function $f,g$ defined on $X.$
We remark that $\cC\ler{\rho,\omega}$ is never empty, because the trivial coupling $\omega\otimes\rho^T$ belongs to $\cC\ler{\rho,\omega}$.
\par
Note that the definition of couplings \eqref{eq:q-coup-def} proposed by De Palma and Trevisan \cite{DPT-AHP} is different from the definition proposed by Golse, Mouhot, Paul \cite{GolseMouhotPaul} in the sense that it involves the dual Hilbert space $\cH^*$ and hence the transpose operation. For a clarification of this difference, see Remark 1 in \cite{DPT-AHP}. For more detail on the latter concept of quantum couplings, the interested reader should consult \cite{CagliotiGolsePaul,CagliotiGolsePaul-towardsqot,GolseMouhotPaul,GolsePaul-Schrodinger,GolsePaul-Nbody, GolsePaul-OTapproach, GolsePaul-meanfieldlimit}. 
\par
Let us jump back to \eqref{eq:def-qot-1} and note that by Born's rule on quantum measurement, if the state of our composite quantum system is described by $\Pi \in \cC(\rho, \omega),$ then 
\be \label{eq:exp-quant-coup}
\E \ler{X_{j}^{(\rho)}-X_{j}^{(\omega)}}^2
=\tr_{\hohc}\left[ \ler{A_j\otimes I_{\cH^*}-I_{\cH} \otimes A_j^T} \Pi \ler{A_j\otimes I_{\cH^*}-I_{\cH} \otimes A_j^T}\right].
\ee
Here we used the convention that the right-hand side of \eqref{eq:exp-quant-coup} is defined to be $+\infty$ if there is an eigenvector of $\Pi$ outside the domain of $A_j\otimes I_{\cH^*}-I_{\cH} \otimes A_j^T$ for some $j \in \lers{1, \dots, k}.$
Therefore, in view of \eqref{eq:def-qot-1}, \eqref{eq:q-coup-def}, and \eqref{eq:exp-quant-coup}, the quadratic Wasserstein distance of $\rho$ and $\omega$ with respect to the measurable quantities $\lers{A_j}_{j=1}^k=:\cba$ is given by
\be \label{eq:2-Wass-quant-def}
D_{\cba}^2\ler{\rho, \omega}=\inf_{\Pi \in \cC(\rho,\omega)}
\lers{\sum_{j=1}^k
\tr_{\hohc}\left[ \ler{A_j\otimes I_{\cH^*}-I_{\cH} \otimes A_j^T}\Pi \ler{A_j\otimes I_{\cH^*}-I_{\cH} \otimes A_j^T}\right]
}.
\ee
We recall (see \cite[Definition 6]{DPT-AHP}) that the energy of a state $\rho \in \sh$ with respect to the observable $A \in \cL(\cH)^{sa}$ is given by $E_A(\rho)=\sum_{j=1}^{\infty} p_j \norm{A \psi_j}^2,$ where $\sum_{j=1}^{\infty} p_j \ket{\psi_j}\bra{\psi_j}$ is the spectral decomposition of $\rho,$ and $E_A(\rho)=+\infty$ if $\psi_j \notin \mathrm{dom}(A)$ for some $j.$ The energy of $\rho$ with respect to the collection of observables $\cA=\lers{A_1, \dots, A_k}$ is defined by $E_{\cA}(\rho)=\sum_{j=1}^k E_{A_j}(\rho).$ Recall that classical quadratic Wasserstein spaces consist of probability measures with finite second moment. As the natural quantum analogs of them are states with finite energy, we restrict our attention to such quantum states in the sequel.
\par
By \cite[Proposition 3]{DPT-AHP}, if the states $\rho, \omega \in \sh$ have finite energy, then any quantum coupling $\Pi \in \cC(\rho, \omega)$ has finite cost. Moreover, both $A_j \rho A_j$ and $A_j \omega A_j$ are trace-class for every $j \in \lers{1, \dots, k}$ -- see \cite[Lemma 3]{DPT-AHP}. Consequently, by the definition of Hilbert-Schmidt and trace-class operators, $\sqrt{\rho}A_j$ and $\sqrt{\omega}A_j$ are Hilbert-Schmidt, and so are $A_j\sqrt{\rho}$ and $A_j \sqrt{\omega}$ as taking the adjoint is an involution of $\cT_2(\cH).$ Note furthermore that both $\sqrt{\rho}$ and $\sqrt{\omega}$ are Hilbert-Schmidt by definition, and hence the operators $\rho A_j, \, A_j \rho, \, \omega A_j, \, A_j \omega$ are trace-class as they are products of two Hilbert-Schmidt operators.  
\par
A prominent coupling of a state $\rho \in \sh$ with itself is the \emph{canonical purification} $\Ket{\sqrt{\rho}}\Bra{\sqrt{\rho}} \in \cS(\hohc)$ which is the rank-one projection corresponding to the unit vector $\Ket{\sqrt{\rho}} \in \hohc$ obtained from $\sqrt{\rho} \in \mathcal{T}_2(\cH)$ by the canonical isomorphism between $\mathcal{T}_2(\cH)$ and $\hohc.$
\par
An important feature of the quadratic Wasserstein distances is that the distance of a state $\rho$ from itself (which may be positive) is always realized by the canonical purification --- see \cite[Corollary 1]{DPT-AHP}. That is,
$$
D_{\cA}^2\ler{\rho, \rho}
=\sum_{j=1}^k
\tr_{\hohc}\left[ \ler{A_j\otimes I_{\cH^*}-I_{\cH} \otimes A_j^T} \Ket{\sqrt{\rho}}\Bra{\sqrt{\rho}} \ler{A_j\otimes I_{\cH^*}-I_{\cH} \otimes A_j^T}\right]
$$
$$
=\sum_{j=1}^k
\tr_{\hohc}\left[ \Ket{A_j\sqrt{\rho} -\sqrt{\rho}A_j}\Bra{A_j\sqrt{\rho} -\sqrt{\rho}A_j}\right]
=\sum_{j=1}^k \norm{A_j\sqrt{\rho} -\sqrt{\rho}A_j}_{HS}^2
$$
We have seen that the finite energy condition on $\rho$ implies that both $A_j\sqrt{\rho}$ and $\sqrt{\rho}A_j$ are Hilbert-Schmidt operators, and hence not only $A_j \rho A_j$ but also $\sqrt{\rho} A_j \sqrt{\rho} A_j$ and $A_j \sqrt{\rho} A_j \sqrt{\rho}$ and $\sqrt{\rho} A_j^2 \sqrt{\rho}$ are trace-class operators. We note that taking the adjoint leaves both the Hilbert-Schmidt norm and the trace invariant, and hence $\norm{A_j \sqrt{\rho}}_{HS}=\norm{\sqrt{\rho} A_j}_{HS}$ and $\tr_{\cH}[A_j \sqrt{\rho}A_j \sqrt{\rho}]=\tr_{\cH}[\sqrt{\rho}A_j \sqrt{\rho}A_j].$ Consequently,
\be \label{eq:self-dist-explicit}
D_{\cA}^2(\rho, \rho)= \sum_{j=1}^k \tr_{\cH}[2 A_j \rho A_j -2 \sqrt{\rho}A_j \sqrt{\rho}A_j].
\ee
Moreover, the following concavity-like result is true for any choice of $\cA=\lers{A_1, \dots, A_k}$ and for any $\rho, \omega \in \sh$ with finite energy:
\be \label{eq:concavity}
D_{\cA}^2(\rho, \omega) \geq \frac{1}{2}\ler{D_{\cba}^2(\rho, \rho)+D_{\cba}^2(\omega, \omega)}.
\ee
Indeed, \eqref{eq:concavity} is an easy consequence of Theorem 1 and Corollary 1 of \cite{DPT-AHP}. The following conjecture was popularized by Dario Trevisan and Giacomo De Palma in the fall of 2022 --- see also \cite{DPT-lecture-notes}. 

\begin{conjecture}[De Palma-Trevisan, \cite{DPT-lecture-notes}]
\label{conj:truemetric}
A modified version of the quantum optimal transport distance \eqref{eq:2-Wass-quant-def} defined by
\be \label{eq:mod-qw-dist-def}
d_{\cba} (\rho, \omega):=\sqrt{D_{\cba}^2(\rho,\omega)-\frac{1}{2}\ler{D_{\cba}^2(\rho, \rho)+D_{\cba}^2(\omega, \omega)}}
\ee
is a true metric for all finite collections of observables $\cA=\lers{A_1, \dots, A_k}$ on the set of those states on $\cH$ that have finite energy with respect to $\cA$ --- up to some non-degeneracy assumptions on the $A_j$'s to ensure the definiteness of $d_{\cA},$ that is, that $d_{\cA}(\rho, \omega)=0$ only if $\rho=\omega.$
\end{conjecture}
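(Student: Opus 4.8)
\emph{The plan} is to reduce the divergence to a single correlation kernel and then recognise the triangle inequality as a Cauchy--Schwarz inequality for that kernel. Expanding the cost in \eqref{eq:2-Wass-quant-def} and using that $A_j\otimes I_{\cH^*}$ and $I_{\cH}\otimes A_j^T$ commute, one finds that the integrand equals $\tr_{\cH}[A_j^2\omega]+\tr_{\cH}[A_j^2\rho]-2\tr_{\hohc}[(A_j\otimes A_j^T)\Pi]$, where the first two terms depend on $\Pi$ only through its marginals. Hence, writing $G:=\sum_j A_j\otimes A_j^T$ and introducing the correlation functional $S(\rho,\omega):=\sup_{\Pi\in\cC(\rho,\omega)}\tr_{\hohc}[G\Pi]$, one obtains $D_{\cA}^2(\rho,\omega)=\sum_j\ler{\tr_{\cH}[A_j^2\rho]+\tr_{\cH}[A_j^2\omega]}-2S(\rho,\omega)$. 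Substituting this into \eqref{eq:mod-qw-dist-def} the energy terms cancel and leave the clean identity $d_{\cA}^2(\rho,\omega)=S(\rho,\rho)+S(\omega,\omega)-2S(\rho,\omega)$, where by the self-distance computation preceding \eqref{eq:self-dist-explicit} the diagonal value $S(\rho,\rho)=\sum_j\tr_{\cH}[\sqrt{\rho}A_j\sqrt{\rho}A_j]$ is realised by the canonical purification.

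With this identity the metric axioms decouple. Symmetry of $d_{\cA}$ is inherited from the symmetry of $D_{\cA}^2$ together with that of the energy terms. Non-negativity of the radicand --- hence that $d_{\cA}$ is a well-defined real quantity --- is exactly the concavity estimate \eqref{eq:concavity}, i.e. $S(\rho,\omega)\le\fel\ler{S(\rho,\rho)+S(\omega,\omega)}$. Definiteness, $d_{\cA}(\rho,\omega)=0\Rightarrow\rho=\omega$ under the stated non-degeneracy on the $A_j$, is to be read off from the equality case of \eqref{eq:concavity}, which forces the optimal coupling to be the canonical purification of a common state. The entire weight of the conjecture therefore sits in the triangle inequality.

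Squaring the desired inequality and cancelling the shared energy terms, $d_{\cA}(\rho,\tau)\le d_{\cA}(\rho,\omega)+d_{\cA}(\omega,\tau)$ becomes the single three-point estimate $S(\rho,\omega)+S(\omega,\tau)-S(\omega,\omega)-S(\rho,\tau)\le d_{\cA}(\rho,\omega)\,d_{\cA}(\omega,\tau)$. This is precisely what Cauchy--Schwarz would yield if $S$ were a genuine inner product: were there a Hilbert space $\cK$ and a feature map $\rho\mapsto v_\rho\in\cK$ with $S(\rho,\omega)=\inner{v_\rho}{v_\omega}$, the left-hand side would equal $\inner{v_\rho-v_\omega}{v_\omega-v_\tau}\le\norm{v_\rho-v_\omega}\,\norm{v_\omega-v_\tau}=d_{\cA}(\rho,\omega)\,d_{\cA}(\omega,\tau)$. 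Thus the programme is to prove that $S$ is a positive-semidefinite kernel on the finite-energy states, i.e. $\sum_{i,l}\overline{c_i}c_l\,S(\rho_i,\rho_l)\ge0$ for every finite family of states and scalars; by Moore--Aronszajn this produces the feature map $\rho\mapsto v_\rho$, and by Schoenberg's theorem it is equivalent to $d_{\cA}^2$ being conditionally negative definite, so that $d_{\cA}$ embeds isometrically into $\cK$ and the triangle inequality is automatic. The pure-intermediate regime is exactly where this already succeeds: for a pure $\omega$ the coupling set $\cC(\rho,\omega)$ collapses to the single product state $\omega\otimes\rho^T$, so $S(\,\cdot\,,\omega)$ is affine and the embedding can be written down explicitly.

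\emph{The hard part} will be precisely the step at which the pure-state argument breaks. For a mixed intermediate state the supremum defining $S$ is a genuine optimisation over a large convex set of couplings, hence concave rather than bilinear in its arguments, and there is no a priori reason for it to be a positive kernel. Equivalently, the three coupling problems $\cC(\rho,\omega)$, $\cC(\omega,\tau)$, $\cC(\rho,\tau)$ must be solved by mutually compatible optimisers --- a role played classically by the gluing lemma, which has no unconditional quantum analogue when the shared marginal $\omega$ is not pure. I would attack this either by proving a quantum gluing statement tailored to the cost $G$, constructing from optimisers of the first two problems a single operator on a threefold tensor product of copies of $\cH$ and $\cH^*$ whose relevant two-body reduction is admissible for $\cC(\rho,\tau)$ and controls the cross term; or, failing an exact gluing, by passing to the semidefinite-programming dual of $S$ --- feasible since the product coupling is admissible and finite-valued by the finite-energy hypothesis --- and certifying conditional negative definiteness of $d_{\cA}^2$ directly on finite configurations. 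The numerical evidence of Section \ref{s:numerics} is consistent with $S$ being positive semidefinite in full generality, which is what makes the embedding strategy plausible beyond the pure case.
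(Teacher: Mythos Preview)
The statement you are attempting is a \emph{conjecture} that the paper does not prove; what the paper establishes is the partial result Theorem~\ref{thm:main} (triangle inequality when the intermediate state $\omega$ is pure, or when both endpoints $\rho,\tau$ are pure), together with numerical evidence and the fragmentary analytic estimates of Section~\ref{s:anal-spec}. Your proposal is likewise not a proof: you correctly isolate the key obstruction --- positive semidefiniteness of the correlation kernel $S(\rho,\omega)=\sup_{\Pi\in\cC(\rho,\omega)}\tr_{\hohc}[G\Pi]$ --- and explicitly leave it open, offering only possible lines of attack (a quantum gluing construction or the SDP dual). In that sense neither you nor the paper settles the conjecture, and you are candid about this.

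Your reduction is correct and clean: the identity $d_{\cA}^2(\rho,\omega)=S(\rho,\rho)+S(\omega,\omega)-2S(\rho,\omega)$ holds, and the triangle inequality is equivalent to the three-point estimate $S(\rho,\omega)+S(\omega,\tau)-S(\omega,\omega)-S(\rho,\tau)\le d_{\cA}(\rho,\omega)\,d_{\cA}(\omega,\tau)$. In the pure-intermediate (or pure-endpoints) regime your sketch collapses to exactly the paper's proof of Theorem~\ref{thm:main}: when the relevant couplings are forced to be products one has $S(\rho,\omega)=\sum_j\tr_{\cH}[\rho A_j]\tr_{\cH}[\omega A_j]$, and combining the product-coupling lower bound $S(\rho,\tau)\ge\sum_j\tr_{\cH}[\rho A_j]\tr_{\cH}[\tau A_j]$ with the Hilbert--Schmidt Cauchy--Schwarz inequality $\tr_{\cH}[\sqrt{X}Y\sqrt{X}Y]\ge(\tr_{\cH}[XY])^2$ yields $d_{\cA}(\rho,\omega)\ge\abs{u_\rho-u_\omega}$ for the moment vector $u_\rho=(\tr_{\cH}[\rho A_j])_j\in\R^k$; Cauchy--Schwarz in $\R^k$ then finishes. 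These are precisely the two Cauchy--Schwarz steps the paper uses, so on the provable part your approach and the paper's coincide, repackaged in kernel language.

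One caution on the general programme: positive semidefiniteness of $S$ (equivalently, conditional negative definiteness of $d_{\cA}^2$) is \emph{strictly stronger} than the triangle inequality --- it would give an isometric embedding of the finite-energy states with $d_{\cA}$ into a Hilbert space, which is not implied by being a metric. The numerics in Section~\ref{s:numerics} test only three-point triangle gaps, not four-or-more-point Gram positivity; if you pursue the kernel route it would be prudent first to test $\det\bigl[S(\rho_i,\rho_l)\bigr]_{i,l}\ge0$ on random finite configurations, since a single failure there would rule out the embedding strategy without contradicting the conjecture itself.
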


Let us denote by $\cP_1(\cH)$ the set of rank-one ortho-projections on $\cH,$ that is, the set of pure states. Note that if either $\rho$ or $\omega$ is a pure state, then the only coupling of them is the tensor product, that is, $\cC(\rho, \omega)=\lers{\omega\otimes\rho^T}.$ Therefore, by \eqref{eq:2-Wass-quant-def}, the quadratic quantum Wasserstein distance $D_{\cA}(\rho, \omega)$ has the following explicit form in this special case:

$$
D_{\cA}^2(\rho, \omega)
=\sum_{j=1}^k \tr_{\hohc}\left[\ler{A_j\otimes I_{\cH^*}-I_{\cH} \otimes A_j^T}\ler{\omega \otimes \rho^T} \ler{A_j\otimes I_{\cH^*}-I_{\cH} \otimes A_j^T}\right]
$$
$$
=\sum_{j=1}^k \tr_{\hohc}\left[A_j\omega A_j \otimes \rho^T-\omega A_j \otimes A_j^T\rho^T- A_j \omega \otimes \rho^T A_j^T + \omega \otimes A_j^T \rho^T A_j^T \right]
$$
\be \label{eq:qW-dist-expl-pure}
=\tr_{\cH}[A_j\omega A_j]+\tr_{\cH}[A_j\rho A_j]-2 \tr_{\cH}[\omega A_j]\tr_{\cH}[\rho A_j].
\ee
Here we used that $\tr_{\hohc}[X \otimes Y]=\tr_{\cH}[X]\tr_{\cH^*}[Y]$ if $X \in \mathcal{T}_1(\cH)$ and $Y \in \mathcal{T}_1(\cH^*),$ that $\tr_{\cH^*}[X^T]=\tr_{\cH}[X]$ for $X \in \mathcal{T}_1(\cH),$ and that the operators $\rho A_j, \, A_j \rho, \omega A_j, A_j \omega$ are trace-class.

\section{Triangle inequality for quantum Wasserstein divergences --- the proof of Theorem \ref{thm:main}}\label{s:main-proof}

Having introduced all the necessary notions and notation, we are in the position to state and prove our main result.

\begin{theorem} \label{thm:main}
Let $\cH$ be a separable Hilbert space, and let $\cA=\lers{A_j}_{j=1}^k \subset \cL(\cH)^{sa}$ be an arbitrary finite collection of observable quantities, and let $d_{\cba}$ be the corresponding quadratic quantum Wasserstein divergence defined by \eqref{eq:mod-qw-dist-def} and \eqref{eq:2-Wass-quant-def}. Let $\rho, \omega, \tau \in \cS(\cH)$ and assume that $\omega \in \cP_1(\cH)$ or both $\rho$ and $\tau$ are in $\cP_1(\cH).$ Moreover, assume that $\rho, \omega$ and $\tau$ have finite energy with respect to $\cA.$  
Then the triangle inequality
\be \label{eq:tri-in}
d_{\cba}(\rho, \tau) \leq d_{\cba}(\rho,\omega)+d_{\cba}(\omega, \tau)
\ee
holds true.
\end{theorem}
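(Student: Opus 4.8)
The key is to reduce the triangle inequality to a statement about a concrete quantity. Introduce, for a collection $\cA=\lers{A_j}_{j=1}^k$ and states $\rho,\omega$ of finite energy, the bilinear ``overlap'' term
\be \label{eq:Q-def-plan}
Q_{\cA}(\rho,\omega):=\sum_{j=1}^k \Re \tr_{\cH}\left[\sqrt{\rho}\, A_j \sqrt{\omega}\, A_j\right].
\ee
From \eqref{eq:self-dist-explicit} one has $D_{\cA}^2(\rho,\rho)=2\sum_j\tr_{\cH}[A_j\rho A_j] - 2 Q_{\cA}(\rho,\rho)$, and the first step is to show that when $\omega\in\cP_1(\cH)$ (or $\rho\in\cP_1(\cH)$), formula \eqref{eq:qW-dist-expl-pure} combined with the definition \eqref{eq:mod-qw-dist-def} gives
\be \label{eq:dsq-plan}
d_{\cA}^2(\rho,\omega)= Q_{\cA}(\rho,\rho)+Q_{\cA}(\omega,\omega) - 2\sum_{j=1}^k \tr_{\cH}[\rho A_j]\,\tr_{\cH}[\omega A_j],
\ee
using that for $\omega=\ket{\xi}\bra{\xi}$ pure, $\sqrt{\omega}=\omega$ and $\tr_{\cH}[\sqrt{\rho}A_j\omega A_j]=\bra{\xi}A_j\sqrt{\rho}A_j\ket{\xi}$ is already real, and $\tr_{\cH}[A_j\omega A_j]=\bra{\xi}A_j^2\ket{\xi}=\norm{A_j\xi}^2$. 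So the whole problem becomes: writing $v_\rho:=(\tr[\rho A_1],\dots,\tr[\rho A_k])\in\R^k$ and $c(\rho):=Q_{\cA}(\rho,\rho)$, we must show $\sqrt{c(\rho)+c(\tau)-2\inner{v_\rho}{v_\tau}}$ obeys the triangle inequality through a pure midpoint $\omega$ (or with $\rho,\tau$ pure).

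The plan is to realize $d_{\cA}^2(\rho,\omega)$ as a genuine squared Hilbert-space distance. Note $Q_{\cA}(\rho,\rho)=\sum_j\norm{\ldots}$ is \emph{not} obviously a norm square of a single vector, but the pure case is special: when $\omega=\ket{\xi}\bra{\xi}$, consider the vector
\[
\Phi(\rho):=\bigoplus_{j=1}^k\bigl(A_j\sqrt{\rho}-\sqrt{\rho}\,\langle A_j\rangle_\omega\bigr)\in\bigoplus_{j=1}^k\cT_2(\cH),
\]
hmm --- better: one expects that $d_{\cA}(\rho,\omega)$ for $\omega$ pure equals the $\ell^2$-distance, over $j$, between the ``fluctuation operators'' $A_j\sqrt{\rho}-\tr[\rho A_j]\sqrt{\rho}$ and $A_j\ket{\xi}\bra{\xi}$-type objects, in a common Hilbert space; working this out, $d_{\cA}^2(\rho,\omega)=\sum_j \norm{ (A_j-\tr[\rho A_j])\sqrt{\rho} - (A_j-\tr[\omega A_j])\sqrt{\omega}\,}_{HS}^2$ should hold \emph{when $\omega$ is pure}, because the cross term $\sum_j 2\Re\tr[\sqrt{\rho}(A_j-\langle A_j\rangle_\rho)(A_j-\langle A_j\rangle_\omega)\sqrt{\omega}]$ collapses exactly to $\sum_j 2\tr[\rho A_j]\tr[\omega A_j]-2\langle\xi|A_j\sqrt\rho A_j|\xi\rangle$-type terms that match \eqref{eq:dsq-plan}. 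Once $d_{\cA}(\rho,\cdot)$ restricted to ``one pure endpoint'' is an honest Hilbert-space distance $\norm{F(\rho)-G(\omega)}$, the triangle inequality $d_{\cA}(\rho,\tau)\le d_{\cA}(\rho,\omega)+d_{\cA}(\omega,\tau)$ follows from the ordinary triangle inequality in that Hilbert space, \emph{provided the embedding is consistent}: $F$ and $G$ must be the \emph{same} map on the overlap, which holds for $\omega$ pure if $F(\rho)=G(\rho)$ whenever $\rho$ is pure. This consistency is exactly why one needs ``$\omega$ pure or both $\rho,\tau$ pure'': in the first case the middle point is in the common domain, in the second the two endpoints are.

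Concretely I would: (i) fix the spectral decomposition $\rho=\sum p_i\ket{\psi_i}\bra{\psi_i}$, use finite energy to ensure all relevant operators are Hilbert--Schmidt/trace-class (already granted by the discussion after \eqref{eq:2-Wass-quant-def}), and verify \eqref{eq:dsq-plan} by direct substitution into \eqref{eq:mod-qw-dist-def}; (ii) identify the Hilbert space $\cK:=\bigoplus_{j=1}^k \cT_2(\cH)$ and the map $\rho\mapsto \Psi(\rho):=\bigl(A_j\sqrt{\rho}-\tr_\cH[\rho A_j]\sqrt{\rho}\bigr)_{j=1}^k$, and show $\norm{\Psi(\rho)}_{\cK}^2=Q_{\cA}(\rho,\rho)-\sum_j(\tr[\rho A_j])^2$ — wait, the additive constant $\sum_j(\tr[\rho A_j])^2$ appears, so one must instead take $\Psi(\rho):=\bigl(A_j\sqrt{\rho}\bigr)_j$ and handle the linear terms separately, OR augment $\cK$ with an extra $\R^k$ summand carrying $v_\rho=(\tr[\rho A_j])_j$, setting $\widehat\Psi(\rho):=\bigl((A_j\sqrt\rho)_j,\ (\tr[\rho A_j])_j\bigr)\in \cK\oplus\R^k$ so that $\norm{\widehat\Psi(\rho)-\widehat\Psi(\omega)}^2 = \sum_j\norm{A_j\sqrt\rho-A_j\sqrt\omega}_{HS}^2 + \norm{v_\rho-v_\omega}^2$; (iii) check that this last expression equals $d_{\cA}^2(\rho,\omega)$ \emph{when $\omega$ is pure} — the verification reduces to $\sum_j 2\Re\tr[\sqrt\rho A_j A_j\sqrt\omega] = 2Q_{\cA}(\rho,\omega)$ matching the cross terms, which works precisely because $\sqrt\omega=\omega$ kills the discrepancy between $\tr[\sqrt\rho A_j\sqrt\omega A_j]$ and $\tr[\sqrt\rho A_j A_j\sqrt\omega]$ up to terms absorbed by the $\R^k$ part; (iv) conclude via the triangle inequality in $\cK\oplus\R^k$, noting the same formula $d_{\cA}^2(\rho,\tau)=\norm{\widehat\Psi(\rho)-\widehat\Psi(\tau)}^2$ holds if \emph{either} endpoint is pure, covering the case $\rho,\tau\in\cP_1(\cH)$.

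The main obstacle will be step (iii): the identity $d_{\cA}^2(\rho,\omega)=\norm{\widehat\Psi(\rho)-\widehat\Psi(\omega)}^2$ is \emph{false for general mixed $\rho,\omega$} (otherwise the full conjecture would follow trivially), so the argument must pinpoint exactly where purity of $\omega$ is used — namely that $\tr_\cH[\sqrt\rho\, A_j\, \omega\, A_j]$ is automatically real and equals $\bra{\xi}A_j\sqrt\rho A_j\ket{\xi}$, and that $D_{\cA}^2(\rho,\omega)$ has the closed form \eqref{eq:qW-dist-expl-pure} with no infimum to fight. I expect the matching of cross-terms to require carefully adding and subtracting $\tr[\rho A_j]\tr[\omega A_j]$ and recognizing $Q_{\cA}(\rho,\rho)=\sum_j\tr[\sqrt\rho A_j\sqrt\rho A_j]$ as the ``self-overlap'' that appears symmetrically; keeping the $\omega$-self-term $Q_{\cA}(\omega,\omega)=\sum_j\norm{A_j\xi}^2$ consistent with $\sum_j\norm{A_j\sqrt\omega}_{HS}^2$ is the sanity check that the construction is correct.
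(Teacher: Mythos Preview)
Your derivation of \eqref{eq:dsq-plan} is correct and is exactly where the paper begins. The gap is in step (iii): none of the embeddings you propose satisfy the claimed identity. For $\widehat\Psi(\rho)=\bigl((A_j\sqrt{\rho})_j,(\tr[\rho A_j])_j\bigr)$ one gets
\[
\norm{\widehat\Psi(\rho)-\widehat\Psi(\omega)}^2
=\sum_j\Bigl(\tr[A_j\rho A_j]+\tr[A_j\omega A_j]-2\Re\tr[\sqrt{\rho}\,A_j^2\sqrt{\omega}]\Bigr)+\abs{v_\rho-v_\omega}^2,
\]
and the cross term involves $\tr[\sqrt{\rho}A_j^2\omega]=\bra{\xi}\sqrt{\rho}A_j^2\ket{\xi}$, \emph{not} $\tr[\sqrt{\rho}A_j\omega A_j]=\bra{\xi}A_j\sqrt{\rho}A_j\ket{\xi}$. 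Purity of $\omega$ does not make these coincide; their difference is $\bra{\xi}[A_j,\sqrt{\rho}]A_j\ket{\xi}$, which is not a function of $v_\rho,v_\omega$ alone and hence cannot be ``absorbed by the $\R^k$ part''. A concrete check: $\cH=\C^2$, $\cA=\{\sigma_3\}$, $\rho=\tfrac12 I$, $\omega=\ket{0}\bra{0}$ give $d_{\cA}^2(\rho,\omega)=2$ but $\norm{\widehat\Psi(\rho)-\widehat\Psi(\omega)}^2=3-\sqrt{2}$. Your earlier candidate $\Psi(\rho)=\bigl((A_j-\tr[\rho A_j])\sqrt{\rho}\bigr)_j$ fails for the same reason: its squared norm is $\sum_j\tr[\rho A_j^2]-(\tr[\rho A_j])^2$, not $Q_{\cA}(\rho,\rho)$. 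In fact no single map $F$ can give $d_{\cA}^2(\rho,\omega)=\norm{F(\rho)-F(\omega)}^2$ for all $\rho$ and all pure $\omega$: polarisation would force $\inner{F(\rho)}{F(\omega)}=\inner{v_\rho}{v_\omega}$, hence $\norm{F(\rho)}^2=\abs{v_\rho}^2$, contradicting $\norm{F(\rho)}^2=Q_{\cA}(\rho,\rho)$ whenever $\rho$ is mixed.

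What does work---and is essentially the paper's proof---is the decomposition
\[
d_{\cA}^2(\rho,\omega)=r_\rho+\abs{v_\rho-v_\omega}^2+r_\omega,
\qquad r_\sigma:=Q_{\cA}(\sigma,\sigma)-\abs{v_\sigma}^2\geq 0,
\]
where the nonnegativity of $r_\sigma$ is exactly the Hilbert--Schmidt Cauchy--Schwarz inequality $\tr[\sqrt{\sigma}A_j\sqrt{\sigma}A_j]\geq(\tr[\sigma A_j])^2$ that the paper invokes, and $r_\sigma=0$ precisely when $\sigma$ is pure. With $\omega$ pure one has $r_\omega=0$, so $d_{\cA}^2(\rho,\omega)=r_\rho+\abs{v_\rho-v_\omega}^2$ and $d_{\cA}^2(\omega,\tau)=r_\tau+\abs{v_\omega-v_\tau}^2$; for $d_{\cA}^2(\rho,\tau)$ one only has the \emph{upper bound} $r_\rho+r_\tau+\abs{v_\rho-v_\tau}^2$ from the trivial coupling (an inequality your plan did not use). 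The triangle inequality then follows from the Euclidean triangle inequality applied to the ad hoc points $(v_\rho,\sqrt{r_\rho},0)$, $(v_\omega,0,0)$, $(v_\tau,0,\sqrt{r_\tau})$ in $\R^{k+2}$---equivalently, from Cauchy--Schwarz in $\R^k$ on $\inner{v_\omega-v_\rho}{v_\tau-v_\omega}$, which is precisely the paper's argument. Note this is a \emph{per-triple} placement, not a universal embedding $F$, consistent with the obstruction above; the case $\rho,\tau$ pure is handled the same way with the roles of the extra coordinates exchanged.
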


\begin{proof}
If $\omega$ is pure or both $\rho$ and $\tau$ are pure, then by \eqref{eq:qW-dist-expl-pure} and \eqref{eq:self-dist-explicit} we have
\begin{equation}
    d_{\cA}^2(\rho,\omega)=\sum_{j=1}^N\left( \tr_{\cH}[\sqrt{\rho} A_j \sqrt{\rho} A_j ]+\tr_{\cH}[\sqrt{\omega}A_j \sqrt{\omega} A_j ]-2 \tr_{\cH} [\rho A_j]\tr_{\cH}[\omega A_j]\right)
\end{equation}
and
\begin{equation}
    d_{\cA}^2(\omega, \tau )=\sum_{j=1}^N\left( \tr_{\cH}[\sqrt{\omega} A_j \sqrt{\omega} A_j]+\tr_{\cH} [\sqrt{\tau} A_j \sqrt{\tau} A_j]-2\tr_{\cH} [\omega A_j] \tr_{\cH} [\tau A_j]\right).
\end{equation}
By relaxing the infimum in the definition of the quantum Wasserstein divergence to the tensor product coupling, we get
\begin{equation}
    d_{\cA}^2(\rho,\tau) \leq \sum_{j=1}^N\left( \tr_{\cH}[\sqrt{\rho} A_j \sqrt{\rho} A_j]+\tr_{\cH} [\sqrt{\tau} A_j \sqrt{\tau} A_j]-2\tr_{\cH} [\rho A_j] \tr_{\cH} [\tau A_j]\right).
\end{equation}
By \eqref{eq:concavity} the quantum Wasserstein divergence is a non-negative real number and hence eq. \eqref{eq:tri-in} is equivalent to
\begin{equation} \label{eq:tri-in-eq}
2 d_{\cba}(\rho,\omega) d_{\cba}(\omega, \tau) \geq d_{\cba}^2(\rho, \tau)-\left(d_{\cba}^2(\rho,\omega) + d_{\cba}^2(\omega, \tau)\right).
\end{equation}

If $X \in \sh$ is a state having finite energy with respect to the observable $Y \in \cL(\cH)^{sa},$ then $X^{1/4}YX^{1/4}$ is a Hilbert-Schmidt operator. Indeed, let $\sum_{j=1}^{\infty} \lambda_j \ket{\varphi_j}\bra{\varphi_j}$ be the spectral resolution of $X,$ and let us compute the trace of the positive operator $(X^{1/4}YX^{1/4})^2$ by
$$
\tr_{\cH}[(X^{1/4}YX^{1/4})^2]=\sum_{j=1}^{\infty} \bra{\varphi_j}X^{1/4} Y X^{1/4} X^{1/4} Y X^{1/4} \ket{\varphi_j}=
$$
$$
=\sum_{j=1}^{\infty} \norm{X^{1/4}YX^{1/4}\varphi_j}^2=\sum_{j=1}^{\infty} \ler{\lambda_{j}^{1/4}}^2\norm{X^{1/4}Y\varphi_j}^2.
$$
Now let us note that this is precisely the trace of $\sqrt{X}Y\sqrt{X}Y$ that we already have shown to be a trace-class operator. This latter statement can be checked by the direct computation
$$
\tr_{\cH}[\sqrt{X}Y\sqrt{X}Y]=\sum_{j=1}^{\infty} \bra{\varphi_j} \sqrt{X}Y\sqrt{X}Y \ket{\varphi_j}=\sum_{j=1}^{\infty} \bra{\sqrt{X}\varphi_j} Y\sqrt{X}Y \ket{\varphi_j}= \sum_{j=1}^{\infty} \sqrt{\lambda_j} \norm{X^{1/4} Y \varphi_j}^2.
$$
So both $\sqrt{X}$ and $X^{1/4}YX^{1/4}$ are Hilbert-Schmidt operators, and by the Cauchy-Schwarz inequality for the Hilbert-Schmidt inner product of them we get 
$$
\tr_{\cH} [\sqrt{X}Y \sqrt{X}Y]=\tr_{\cH}[(X^{1/4}YX^{1/4})^2] \tr_{\cH}[(\sqrt{X})^2] \geq (\tr_{\cH} [X^{1/4} Y X^{1/4} X^{1/2}])^2=(\tr_{\cH} [X Y])^2,
$$ 
where we used that both $X^{1/4} Y X^{3/4}$ and $X Y$  are trace-class and hence their traces coincide, and hence we get the following upper bound for the right-hand side (RHS) of \eqref{eq:tri-in-eq}:
$$
RHS \leq \sum_{j=1}^N\left( \tr_{\cH}[\sqrt{\rho} A_j \sqrt{\rho} A_j]+\tr_{\cH} [\sqrt{\tau} A_j \sqrt{\tau} A_j]-2\tr_{\cH} [\rho A_j] \tr_{\cH} [\tau A_j]\right)-
$$
$$
-\sum_{j=1}^N\left( \tr_{\cH}[\sqrt{\rho} A_j \sqrt{\rho} A_j ]+\tr_{\cH}[\sqrt{\omega}A_j \sqrt{\omega} A_j ]-2 \tr_{\cH} [\rho A_j]\tr_{\cH}[\omega A_j]\right)-
$$
$$
-\sum_{j=1}^N\left( \tr_{\cH}[\sqrt{\omega} A_j \sqrt{\omega} A_j]+\tr_{\cH} [\sqrt{\tau} A_j \sqrt{\tau} A_j]-2\tr_{\cH} [\omega A_j] \tr_{\cH} [\tau A_j]\right) \leq
$$
$$
\leq \sum_{j=1}^N \ler{-2 \tr_{\cH} [\rho A_j]\tr_{\cH} [\tau A_j]- 2 (\tr_{\cH} [\omega A_j])^2+2\tr_{\cH}[\rho A_j]\tr_{\cH} [\omega A_j]+2 \tr_{\cH} [\omega A_j]\tr_{\cH} [\tau A_j]}=
$$
 
\begin{equation}
    = \sum_{j=1}^N 2 \ler{\tr_{\cH} [\omega A_j] -\tr_{\cH}[\rho A_j]}\ler{\tr_{\cH} [\tau A_j]-\tr_{\cH} [\omega A_j]}.
\end{equation}
Now a Cauchy-Schwartz for the Euclidean space $\mathbb{R}^N$ tells us that
$$
\sum_{j=1}^N 2 \ler{\tr_{\cH} [\omega A_j] -\tr_{\cH}[\rho A_j]}\ler{\tr_{\cH} [\tau A_j]-\tr_{\cH} [\omega A_j]}
\leq
$$
$$
\leq
2 \left( \sum_{j=1}^N (\tr_{\cH} [\omega A_j] -\tr_{\cH} [\rho A_j])^2 \right)^{1/2} \left( \sum_{k=1}^N (\tr_{\cH} [\tau A_k]-\tr_{\cH} [\omega A_k])^2 \right)^{1/2}
\leq
$$
$$
\leq
2
\left( \sum_{j=1}^N\left( \tr_{\cH}[\sqrt{\omega}A_j \sqrt{\omega} A_j ]+\tr_{\cH}[\sqrt{\rho}A_j \sqrt{\rho} A_j]-2 \tr_{\cH} [\omega A_j] \tr_{\cH}[\rho A_j]\right)\right)^{1/2}
\times
$$
$$
\times
\left( \sum_{j=1}^N\left( \tr_{\cH}[\sqrt{\tau}A_j \sqrt{\tau} A_j ]+\tr_{\cH}[\sqrt{\omega}A_j \sqrt{\omega} A_j]-2 \tr_{\cH} [\tau A_j] \tr_{\cH}[\omega A_j]\right)\right)^{1/2}=
$$
$$
=2 d_{\cba}(\omega, \rho) d_{\cba}(\tau,\omega)
$$
where we used again the Cauchy-Schwarz inequality for the Hilbert-Schmidt inner product \\  $\tr_{\cH} [\sqrt{X}Y \sqrt{X}Y] \geq (\tr_{\cH} [X Y])^2,$ for $X=\rho,\tau,\omega$ and $Y=A_j.$ This completes the proof of \eqref{eq:tri-in-eq} and hence that of the Theorem.
\end{proof}

\section{Numerical evidence for the triangle inequality for generic triplets of states}\label{s:numerics}

The quantum optimal transport problem is a semidefinite programming task, and we used Wolfram Mathematica \cite{wolf-math} to perform numerical simulations. The data generated during our experiments (Mathematica notebooks, their pdf images, and the raw data exported from the notebooks) is available online, see \cite{num-data}.
\par
In the quantum bit ($\cH=\C^2$) case, we ran the following experiment. We chose four pairs of random states $(\rho_{(0,1)},\tau_{(0,1)}), (\rho_{(0,2)},\tau_{(0,2)}), (\rho_{(0,3)},\tau_{(0,3)}),(\rho_{(0,4)},\tau_{(0,4)})$ and four triples of random self-adjoint operators: $\cA_{(0,1)},\cA_{(0,2)},\cA_{(0,3)}$ and $\cA_{(0,4)}.$ The random states are normalized \emph{Wishart matrices:} they are of the form $\frac{X^*X}{\tr(X^*X)}$ where $X$ is a $2 \times 2$ random matrix with i.i.d. complex standard Gaussian entries. The random self-adjoint operators are defined similarly: they are of the form $Y+Y^*$ where the elements of $Y$ are i.i.d. complex Gaussians. Then we let $\omega \in \sh$ run on the following lattice within the state space:
\be \label{eq:lat-def}
\mathrm{Lat}(\cS(\C^2)):=\lers{ \fel \ler{I+\frac{1}{10}(j \sigma_1+k \sigma_2+l \sigma_3)}\, \middle| \, j,k,l \in \Z, \, j^2+k^2+l^2 \leq 100}.
\ee
We computed the minimal gap between the two sides of the triangle inequality
\be
\min_{\omega \in \mathrm{Lat}(\cS(\C^2))}\ler{d_{\cA_{(0,m)}}(\rho_{(0,n)},\omega)+d_{\cA_{(0,m)}}(\omega, \tau_{(0,n)})-d_{\cA_{(0,m)}}(\rho_{(0,n)},\tau_{(0,n)})}
\ee
for every $m,n \in {1,2,3,4}.$
We found the following $4 \times 4=16$ elements long list of minimal gaps: 
$$
\left[
\ba{ccccc}
m/n & 1 & 2 & 3 & 4 \\
1 & 0.310819 & 0.528506 &  0.760247 & 0.352543\\
2 & 0.218016 & 0.715538 & 0.590063 & 0.453942 \\
3 & 0.280697 & 0.642319 & 0.669042 & 0.800527 \\
4 & 0.195821 & 0.443850 & 0.447589 & 0.401331
\ea
\right].
$$
Then we turned to the case of higher dimensions $d:=\mathrm{dim}(\cH)=3,4,5.$ In each dimension, we generated $4000$ triples of random states $$(\rho_{(d,n)}, \omega_{(d,n)}, \tau_{(d,n)}),\qquad n=1, \dots, 4000$$ which are i.i.d. normalised $d \times d$ Wishart matrices, and $4000$ triples of random self-adjoint matrices $\cA_{(d,n)} \, (1 \leq n \leq 4000).$ Then we computed the minimal gap
\be \label{eq:dim-min-gap}
\mathrm{mg}(d):=
\min_{n \in \lers{1, \dots, 4000}} \ler{d_{\cA_{(d,n)}}(\rho_{(d,n)},\omega_{(d,n)})+d_{\cA_{(d,n)}}(\omega_{(d,n)}, \tau_{(0,n)})-d_{\cA_{(0,m)}}(\rho_{(d,n)},\tau_{(d,n)})}.
\ee
We found that $\mathrm{mg}(3)=0.854168, \mathrm{mg}(4)=1.89892,$ and $\mathrm{mg}(5)=2.69551.$
\par
That is, we found strong numerical evidence indicating that the conclusion of Theorem \ref{thm:main} holds in full generality, without any additional assumption on the states $\rho, \omega, \tau$ involved.
\par
These minima are convincing but do not tell too much about how the gap depends on the states involved. Therefore, we worked out the following illustrative examples. 
\par
In the first example, we considered $\cH=\C^2$ and chose the deterministic states $$\rho_{(1,1)}=\frac{1}{2}\ler{I+\frac{1}{\sqrt{2}}\sigma_1+\frac{1}{\sqrt{3}}\sigma_2}\qquad\mbox{and}\qquad \tau_{(1,1)}=\frac{1}{2}\ler{I+\frac{1}{3}\sigma_2+\frac{1}{4}\sigma_3}.$$ We singled out the section $z=\frac{1}{\sqrt{2}}$ of the Bloch ball, that is, we took $\omega$'s of the form $$\omega=\frac{1}{2}\ler{I+x \sigma_1+y \sigma_2 +\frac{1}{\sqrt2}\sigma_3}$$ where $x^2+y^2 \leq \frac{1}{2}.$ The set of self-adjoint matrices generating the quadratic cost operator is chosen to be $\cA_{(1,1)}=\lers{\sigma_1, \sigma_3}.$
We considered the gap
$$
d_{\cA_{(1,1)}}(\rho_{(1,1)},\omega_{(1,1)}(x,y))+d_{\cA_{(1,1)}}(\omega_{(1,1)}(x,y), \tau_{(1,1)})-d_{\cA_{(1,1)}}(\rho_{(1,1)},\tau_{(1,1)})
$$
where $\omega_{(1,1)}(x,y)=\frac{1}{2}\ler{I+x \sigma_1+y \sigma_2 +\frac{1}{\sqrt2}\sigma_3},$ and plotted it as a function of $x$ and $y$ in Figure \ref{fig:c2-det}.

\begin{figure}[H]
\caption{The plot of the gap in the first scenario: $\cH=\C^2,$ the states and the cost are chosen to be nice}
\label{fig:c2-det}
\includegraphics[width=0.93\textwidth]{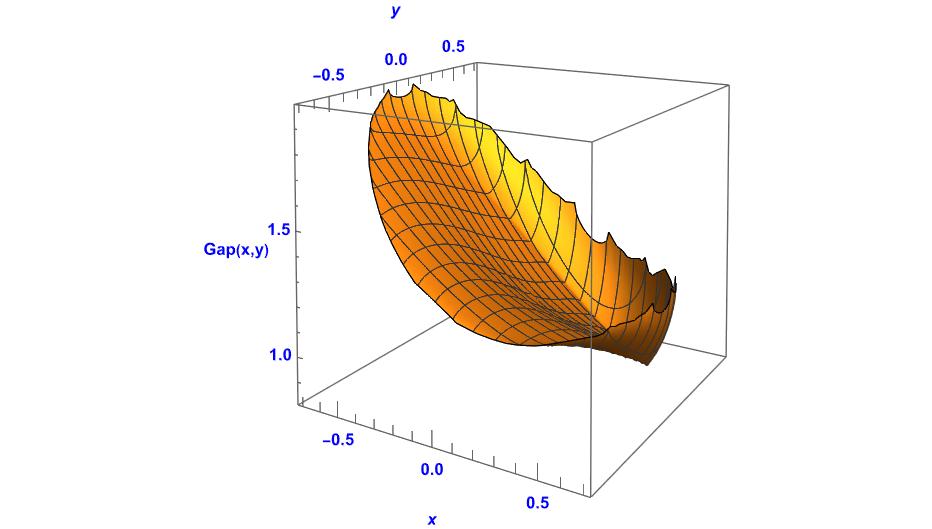}
\end{figure}

The second example deals with $\cH=\C^4.$ We chose
$$
\rho_{(1,2)}=\frac{1}{4}(I+\frac{1}{10} \sigma_1 \otimes \sigma_1+ \frac{1}{5}\sigma_2 \otimes \sigma_0 + \frac{3}{10} \sigma_3 \otimes \sigma_0)
$$ 
and
$$
\tau_{(1,2)}=\frac{1}{4}(I+\frac{3}{10} \sigma_0 \otimes \sigma_3+ \frac{1}{5}\sigma_1 \otimes \sigma_3 + \frac{1}{10} \sigma_3 \otimes \sigma_0).
$$
We considered $\omega$'s of the form
$$\omega_{(1,2)}(x,y)=$$
\be \label{eq:omega-reg-def}
=\frac{1}{4}(I+x \sigma_0 \otimes \sigma_1 + y \sigma_0 \otimes \sigma_2 +\frac{1}{10} \sigma_1 \otimes \sigma_0 + \frac{1}{10} \sigma_1 \otimes \sigma_1+ \frac{1}{10} \sigma_1 \otimes \sigma_2+\frac{3}{10} \sigma_2 \otimes \sigma_0+\frac{1}{5} \sigma_2 \otimes \sigma_2)
\ee
and the cost governed by the collection of all possible tensor products of Pauli matrices:
$$
\cA_{(1,2)}=\lers{\sigma_j \otimes \sigma_k \, \middle| \, j,k \in {0,1,2,3}, \, (j,k) \neq (0,0)}.
$$
Figure \ref{fig:c4-det} shows the plot of the gap
$$
d_{\cA_{(1,2)}}(\rho_{(1,2)},\omega_{(1,2)}(x,y))+d_{\cA_{(1,2)}}(\omega_{(1,2)}(x,y), \tau_{(1,2)})-d_{\cA_{(1,2)}}(\rho_{(1,2)},\tau_{(1,2)})
$$

\begin{figure}[H]
\caption{The plot of the gap in the second scenario: $\cH=\C^4,$ the states and the cost are chosen to be nice}
\label{fig:c4-det}
\includegraphics[width=0.60\textwidth]{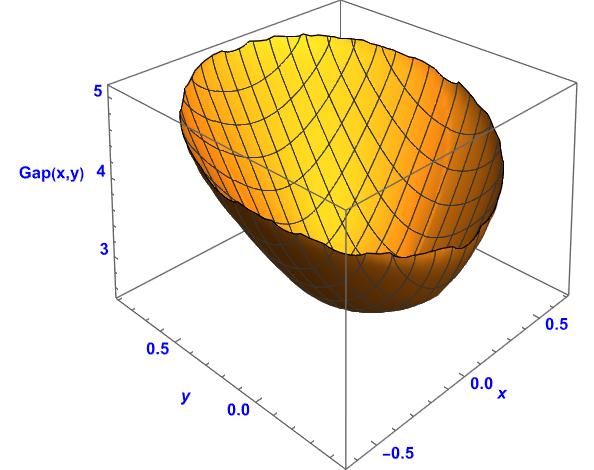}
\end{figure}

The following Figure \ref{fig:c2-rand} concerns the qubit case again, but now the states $\rho_{(1,3)}$ and $\tau_{(1,3)}$ are random (independent normalized Wishart) densities, the quadratic cost operator is generated by random self-adjoint operators, and $\omega_{(1,3)}=\omega_{(1,3)}(x,y)$ runs over the section $z=\frac{1}{5},$ that is, $\omega_{(1,3)}(x,y)=\frac{1}{2}\ler{I+x \sigma_1+y \sigma_2 +\frac{1}{5}\sigma_3}$ where $x^2+y^2 \leq \frac{24}{25}.$

\begin{figure}[H]
\caption{The plot of the gap in the third scenario: $\cH=\C^2,$ the states and the cost are random}
\label{fig:c2-rand}
\includegraphics[width=0.5\textwidth]{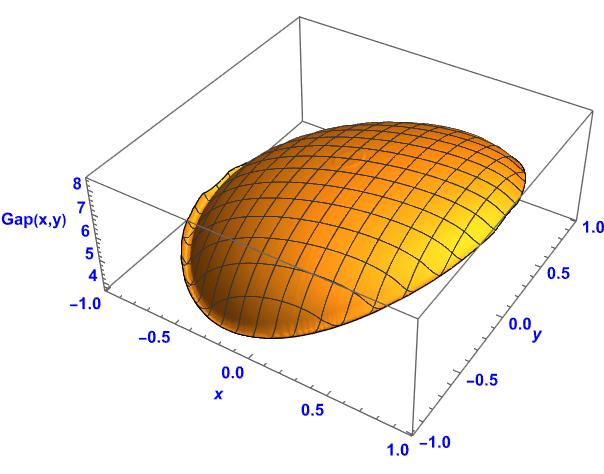}
\end{figure}

Finally, Figure \ref{fig:c4-rand} shows the behaviour of the gap in the $\cH=\C^4$ case, with random states $\rho_{(1,4)}$ and $\tau_{(1,4)},$ and random quadratic cost. The third state $\omega_{(1,4)}$ runs over the region described in \eqref{eq:omega-reg-def}.

\begin{figure}[H]
\caption{The plot of the gap in the second scenario: $\cH=\C^4,$ the states and the cost are random}
\label{fig:c4-rand}
\includegraphics[width=0.7\textwidth]{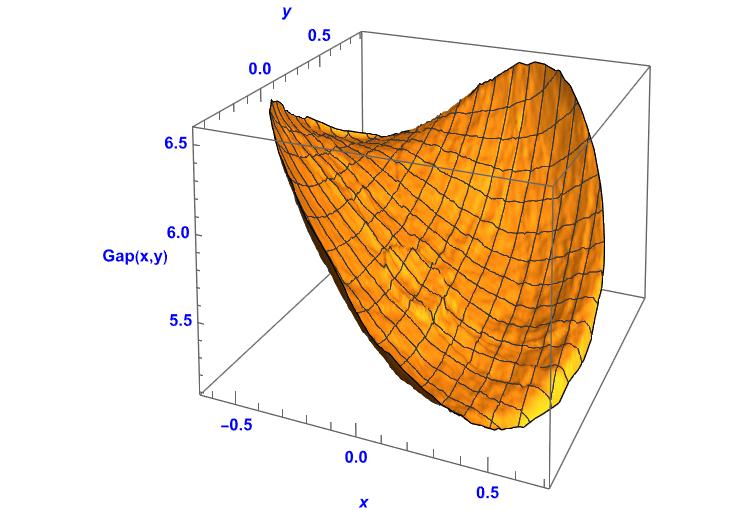}
\end{figure}

\section{Steps towards an analytic proof of the triangle inequality in full generality}\label{s:anal-spec}

The positive numerical results presented in Section \ref{s:numerics} encouraged us to take steps towards an analytical proof of the general case. In Subsection \ref{ss:qub-sym} we study the special case when $\cH=\C^2$ and the cost operator $C$ is the \emph{symmetric cost}. As we will see, it is vital to get useful lower bounds for quantum Wasserstein distances. One way to get these lower bounds is to find (not necessarily optimal) solutions of the dual optimization problem --- see Proposition \ref{prop:dual}. Another way of getting lower bounds is described in Subsection \ref{ss:low-est}. The starting point of this method is the observation that the summands of the quadratic cost operators are gaps between the arithmetic and geometric means of certain operators.

\subsection{Triangle inequality for qubits --- the case of the symmetric cost}\label{ss:qub-sym}

In the special case $\cH=\C^2,$ elements of $\mathcal{S}(\cH)$ can be represented by vectors using the Bloch representation. The \emph{Bloch vector} $\bb_{\rho}$ of a state $\rho \in \mathcal{S}(\cH)$ is defined by
$$
\R^3 \ni \bb_{\rho}:=\ler{\tr_{\cH}(\rho \sigma_j) }_{j=1}^3
$$
where the $\sigma_j$'s are the Pauli operators
\be \label{eq:pauli}
\sigma_1=\lesq{\ba{cc} 0 & 1  \\  1 & 0  \ea}\qquad \sigma_2=\lesq{\ba{cc} 0 & -i  \\  i & 0  \ea}\qquad \sigma_3=\lesq{\ba{cc} 1 & 0  \\  0 & -1  \ea}.
\ee
The positivity condition $\rho \geq 0$ is equivalent to the Euclidean length of $\bb_{\rho}$ being at most $1.$

In this section we stick to $\cH=\C^2$ and we choose the cost operator to be
\be \label{eq:symm-cost-def}
C_{s}=\sum_{j=1}^3 \ler{\sigma_j \otimes I_{\C^2}^T-I_{\C^2} \otimes \sigma_j^T}^2
\ee
where the ``s'' in the subscript of $C_s$ refers to the symmetry of $C_s$ meaning that it involves all the Pauli matrices. The induced quantum Wasserstein distance \eqref{eq:2-Wass-quant-def} is denoted by $D_s,$ and the corresponding modified quantum Wasserstein distance, or \emph{Wasserstein divergence} \eqref{eq:mod-qw-dist-def} is denoted by $d_s.$ 
\par
We aim to prove the triangle inequality
\be \label{eq:tri-ineq-spec}
d_s(\rho, \tau) \leq d_s(\rho, \omega)+d_s(\omega, \tau)
\ee
for a reasonably large class of states $\rho, \omega, \tau \in \cS(\C^2).$
\par
As we have seen, \eqref{eq:tri-ineq-spec} is equivalent to
\be \label{eq:tri-ineq-spec-equiv}
2 d_s(\rho,\omega) d_s(\omega, \tau) \geq d_s^2(\rho, \tau)-\left(d_s^2(\rho,\omega) + d_s^2(\omega, \tau)\right).
\ee
By the very definition of the Wasserstein divergence \eqref{eq:mod-qw-dist-def}, the above \eqref{eq:tri-ineq-spec-equiv} can be written as
$$
2\Big(D_s^2(\rho,\omega)-\fel \big(D_s^2(\rho, \rho) + D_s^2(\omega, \omega)\big)\Big)^{\fel}\Big(D_s^2(\omega,\tau)-\fel \big(D_s^2(\omega, \omega) + D_s^2(\tau, \tau)\big)\Big)^{\fel} \geq
$$
$$
\geq
D_s^2(\rho,\tau)-\fel \ler{D_s^2(\rho, \rho) + D_s^2(\tau, \tau)}-D_s^2(\rho,\omega)+\fel \ler{D_s^2(\rho, \rho) + D_s^2(\omega, \omega)} -
$$
\be \label{eq:lhs-rhs}
-D_s^2(\omega,\tau)+\fel \ler{D_s^2(\omega, \omega) + D_s^2(\tau, \tau)}
=D_s^2(\rho,\tau)-D_s^2(\rho,\omega)-D_s^2(\omega,\tau)+ D_s^2(\omega, \omega).
\ee
Our strategy to prove \eqref{eq:lhs-rhs} is to give a lower bound $\mathrm{LB}$ of the left-hand side of \eqref{eq:lhs-rhs} and an upper bound $\mathrm{UB}$ of the right side of \eqref{eq:lhs-rhs} such that $\mathrm{LB} \geq \mathrm{UB}.$
The following statement will turn out to be useful in deriving such lower and upper bounds.
\begin{proposition} \label{prop:dual}
We have the lower bound
\be \label{eq:qwd-lower}
D_s^2(\rho, \omega) \geq 4 \abs{\bb_\rho-\bb_\omega}_2
\ee
for any $\rho, \omega \in \cS(\C^2)$ where $\bb_\rho$ is the \emph{Bloch vector} of $\rho$ and $\abs{\cdot}_2$ denotes the $l_2$ (Euclidean) norm.
\end{proposition}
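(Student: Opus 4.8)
\emph{Proof strategy.} The plan is to bound $D_s^2(\rho,\omega)$ from below by producing a feasible point of the semidefinite dual of the optimisation problem \eqref{eq:2-Wass-quant-def} and optimising it. The weak-duality inequality we need is elementary: for any coupling $\Pi\in\cC(\rho,\omega)$ and any self-adjoint $U,V\in\cL(\C^2)$, the marginal relations displayed after \eqref{eq:q-coup-def} together with \eqref{eq:part-trace-def} give
\[
\tr_{\hohc}[C_s\Pi]=\tr_{\hohc}[(C_s-U\otimes I_{\C^2}-I_{\C^2}\otimes V^T)\Pi]+\tr_{\cH}[\omega U]+\tr_{\cH}[\rho V];
\]
hence, if $C_s-U\otimes I_{\C^2}-I_{\C^2}\otimes V^T\geq0$, then since $\Pi\geq0$ the first summand is nonnegative and so $D_s^2(\rho,\omega)\geq\tr_{\cH}[\omega U]+\tr_{\cH}[\rho V]$ (strong duality is not invoked). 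Before choosing $(U,V)$ I would first rewrite the cost operator: from $\sigma_j^2=I_{\C^2}$ one gets $(\sigma_j\otimes I_{\C^2}-I_{\C^2}\otimes\sigma_j^T)^2=2I_{\C^2\otimes\C^2}-2\,\sigma_j\otimes\sigma_j^T$, and the standard identity $\sum_{j=1}^3\sigma_j\otimes\sigma_j^T=4P-I_{\C^2\otimes\C^2}$ --- where $P:=\Ket{\tfrac1{\sqrt2}I_{\C^2}}\Bra{\tfrac1{\sqrt2}I_{\C^2}}$ is the canonical purification of the maximally mixed state $\tfrac12 I_{\C^2}$ --- turns \eqref{eq:symm-cost-def} into the clean expression $C_s=8\,(I_{\C^2\otimes\C^2}-P)$.

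Next I would choose the dual variables. If $\bb_\rho=\bb_\omega$ the assertion is trivial, so assume $\bb_\rho\neq\bb_\omega$, set $\mathbf{n}:=(\bb_\omega-\bb_\rho)/\abs{\bb_\omega-\bb_\rho}_2\in\R^3$, put $N:=\sum_{j=1}^3 n_j\sigma_j$ (so $N^*=N$ and $N^2=I_{\C^2}$), and take $U:=4N$, $V:=-4N$. Since $\tr_{\cH}[\sigma_j X]=(\bb_X)_j$ for any state $X$, the resulting lower bound equals
\[
\tr_{\cH}[\omega U]+\tr_{\cH}[\rho V]=4\,\mathbf{n}\cdot(\bb_\omega-\bb_\rho)=4\,\abs{\bb_\rho-\bb_\omega}_2 ,
\]
which is exactly the quantity in \eqref{eq:qwd-lower}. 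So everything reduces to verifying feasibility: $M:=C_s-4N\otimes I_{\C^2}+4\,I_{\C^2}\otimes N^T\geq0$.

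The feasibility check is the heart of the matter and the only genuinely computational step. Using $C_s=8(I-P)$, write $M=8(I-P)-4T$ with $T:=N\otimes I_{\C^2}-I_{\C^2}\otimes N^T$. The key observation is that $T$ annihilates the maximally entangled vector: from the intertwining relations $(A\otimes I_{\cH^*})\Ket{B}=\Ket{AB}$ and $(I_{\cH}\otimes A^T)\Ket{B}=\Ket{BA}$ of the canonical isomorphism $\cT_2(\cH)\cong\hohc$, both $N\otimes I_{\C^2}$ and $I_{\C^2}\otimes N^T$ map $\Ket{\tfrac1{\sqrt2}I_{\C^2}}$ to $\Ket{\tfrac1{\sqrt2}N}$, hence $T\Ket{\tfrac1{\sqrt2}I_{\C^2}}=0$. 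Thus $T$ commutes with $P$, so $M$ is block-diagonal with respect to $\mathrm{ran}\,P\oplus\mathrm{ran}(I-P)$: the block on $\mathrm{ran}\,P$ is $0$, and on $\mathrm{ran}(I-P)$ the block is $8\,I-4\,T$, so it suffices to check $T\leq2I$ there. But $N\otimes I_{\C^2}$ and $I_{\C^2}\otimes N^T$ commute and each has spectrum $\{-1,+1\}$ (since $N^2=I$), so $T$ has spectrum $\{-2,0,0,2\}$, the eigenvalue $0$ carrying a two-dimensional eigenspace that contains $\Ket{\tfrac1{\sqrt2}I_{\C^2}}$; consequently the spectrum of $T$ restricted to the three-dimensional space $\mathrm{ran}(I-P)$ is contained in $\{-2,0,2\}\subset(-\infty,2]$. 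Hence $8I-4T\geq0$ on $\mathrm{ran}(I-P)$, so $M\geq0$, and combined with the first paragraph this yields \eqref{eq:qwd-lower}. I expect the only mildly error-prone part to be keeping the transpose on the second tensor leg straight; apart from that this is $2\times2$ linear algebra and I foresee no serious obstacle.
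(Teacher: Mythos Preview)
Your proof is correct and follows essentially the same route as the paper: both establish the Loewner bound $C_s\geq 4N\otimes I_{\C^2}-4\,I_{\C^2}\otimes N^T$ by observing that the right-hand side annihilates the maximally entangled vector $v_0$ and is dominated by $8I$ on its orthogonal complement, then choose $N=\frac{\bb_\omega-\bb_\rho}{|\bb_\omega-\bb_\rho|_2}\cdot\sigma$. The only cosmetic differences are that the paper derives the spectral form of $C_s$ from its unitary invariance (rather than from the Pauli identity $\sum_j\sigma_j\otimes\sigma_j^T=4P-I$) and bounds $T$ on $\mathrm{ran}(I-P)$ via $\|4N\|\leq 4$ instead of your explicit spectrum computation $\mathrm{spec}(T)=\{-2,0,0,2\}$.
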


\begin{proof}
The first step is to prove that
\be \label{eq:loewner-bound}
C_s \geq X \otimes I_{\C^2}^{T}-I_{\C^2} \otimes X^T 
\ee
for any $X \in \mathcal{L}^{sa}(\C^2)$ satisfying $-4 I_{\C^2} \leq X \leq 4 I_{\C^2}.$
A crucial observation of \cite{GPTV23} is that the symmetric cost operator $C_s$ defined in \eqref{eq:symm-cost-def} is \emph{unitary invariant,} that is,
\be \label{eq:sym-cost-invariance}
\ler{U \otimes \ler{U^T}^*} C_{s} \ler{U^* \otimes U^T}=C_{s}
\ee
for any $U \in \mathbf{U}(2).$ Therefore, the spectral resolution of $C_s,$ which is computed in \cite{GPTV23} in the computational basis, is valid in any orthonormal basis.
\par
Let $\{e_1,e_2\} \subset \C^2$ be the eigenbasis of $X$ and let $\{f_1,f_2\} \subset (\C^2)^*$ be the corresponding dual basis.
Then we have 
\be \label{eq:sym-cost-spectral}
C_S=0 \cdot \ket{v_0}\bra{v_0}+8\cdot(\ket{v_1}\bra{v_1}+\ket{v_2}\bra{v_2}+\ket{v_3}\bra{v_3})
\ee
where
$$
v_0=\frac{1}{\sqrt{2}}(e_1 \otimes f_1 +e_2 \otimes f_2), \quad
v_1=\frac{1}{\sqrt{2}}(e_1 \otimes f_2 +e_2 \otimes f_1),
$$
$$
v_2=\frac{1}{\sqrt{2}}(-i e_1 \otimes f_2 + i e_2 \otimes f_1), \quad
v_3=\frac{1}{\sqrt{2}}(e_1 \otimes f_1 - e_2 \otimes f_2).
$$
Now let us note that $X \otimes I_{\C^2}^{T}-I_{\C^2} \otimes X^T$ always annihilates $v_0=\frac{1}{\sqrt{2}}(e_1 \otimes f_1 +e_2 \otimes f_2).$ Indeed,
$$
(X \otimes I_{\C^2}^{T}-I_{\C^2} \otimes X^T)(e_1 \otimes f_1 +e_2 \otimes f_2)=
$$
$$
=(\lambda_1 e_1) \otimes f_1 + (\lambda_2 e_2) \otimes f_2- e_1 \otimes (\lambda_1 f_1) - e_2 \otimes (\lambda_2 f_2)=0.
$$
And on the subspace orthogonal to $v_0$ it is enough to guarantee that 
$$
X \otimes I_{\C^2}^{T}-I_{\C^2} \otimes X^T \leq 8 I
$$
which is clear as both $X \otimes I_{\C^2}^{T}$ and $-I_{\C^2} \otimes X^T$ is bounded from above by $4 I.$
So we have proved \eqref{eq:loewner-bound} and now we use it to get useful lower bounds on quantum Wasserstein distances by appropriate choices of $X.$
The lower bound on the symetric cost \eqref{eq:loewner-bound} implies that for any $\rho, \omega \in \cS(\C^2)$ we have
$$
D_s^2(\rho, \omega)
=\tr_{\hohc}\ler{\Pi_0 C_s}\geq \tr_{\hohc}\ler{\Pi_0 (X \otimes I_{\C^2}^{T}-I_{\C^2} \otimes X^T)}=
$$
$$
=\tr_{\cH}(\omega X)-\tr_{\cH}(\rho X)=\tr_{\cH}((\omega-\rho) X).
$$
We aim for the highest possible lower bound on $D_s^2(\rho, \omega),$ so let us choose $X$ as follows:
\be \label{eq:x-def}
X:=4 \frac{\bb_{\omega}-\bb_{\rho}}{\abs{\bb_{\omega}-\bb_{\rho}}_2} \cdot \mathbb{\sigma}
\ee
This choice satisfies the condition $-4 I_{\C^2} \leq X \leq 4 I_{\C^2}$ and
$$
\tr_{\cH}((\omega-\rho) X)
=\tr_{\cH}\ler{
\ler{\frac{1}{2}(\bb_{\omega}-\bb_{\rho}) \cdot \mathbb{\sigma}}
\ler{4 \frac{\bb_{\omega}-\bb_{\rho}}{\abs{\bb_{\omega}-\bb_{\rho}}_2} \cdot \mathbb{\sigma}}
}=
$$
$$
=\frac{2}{\abs{\bb_{\omega}-\bb_{\rho}}_2}2\abs{\bb_{\omega}-\bb_{\rho}}_2^2=4 \abs{\bb_{\omega}-\bb_{\rho}}_2
$$
which is precisely the lower bound appearing in \eqref{eq:qwd-lower}.
\end{proof}
Moreover, we have the following explicit formula for the self-distance $D_s^2(\rho,\rho).$
\begin{proposition} \label{prop:self-dist}
We have 
\be \label{eq:self-dist-expl}
D_s^2(\rho,\rho)=4\ler{1-\sqrt{1-\abs{\bb_\rho}_2^2}}
\ee
where $\abs{\cdot}_2$ denotes the Euclidean ($l_2$) norm.
\end{proposition}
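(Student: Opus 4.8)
The plan is to read off $D_s^2(\rho,\rho)$ directly from the explicit self-distance formula \eqref{eq:self-dist-explicit}, which applies here because every state on the finite-dimensional space $\C^2$ automatically has finite energy with respect to $\cA=\lers{\sigma_1,\sigma_2,\sigma_3}$. Written out for this choice of observables,
\be
D_s^2(\rho,\rho)=\sum_{j=1}^3\tr_{\cH}\lesq{2\sigma_j\rho\sigma_j-2\sqrt{\rho}\sigma_j\sqrt{\rho}\sigma_j}=2\tr_{\cH}\!\lesq{\sum_{j=1}^3\sigma_j\rho\sigma_j}-2\tr_{\cH}\!\lesq{\sqrt{\rho}\sum_{j=1}^3\sigma_j\sqrt{\rho}\sigma_j},
\ee
so the whole computation reduces to understanding the Pauli twirl $X\mapsto\sum_{j=1}^3\sigma_j X\sigma_j$ on $\cL(\C^2)$.

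The key algebraic step is the identity $\sum_{j=1}^3\sigma_j X\sigma_j=2\,(\tr_{\cH}X)\,I_{\C^2}-X$, valid for every $X\in\cL(\C^2)$, which I would verify by linearity on the basis $\lers{I_{\C^2},\sigma_1,\sigma_2,\sigma_3}$ using $\sigma_j\sigma_j\sigma_j=\sigma_j$ and $\sigma_j\sigma_k\sigma_j=-\sigma_k$ for $k\neq j$. Applying it with $X=\rho$ gives $\tr_{\cH}\lesq{\sum_j\sigma_j\rho\sigma_j}=\tr_{\cH}\lesq{2I_{\C^2}-\rho}=3$, while applying it with $X=\sqrt{\rho}$ and using cyclicity of the trace gives $\tr_{\cH}\lesq{\sqrt{\rho}\sum_j\sigma_j\sqrt{\rho}\sigma_j}=2\,(\tr_{\cH}\sqrt{\rho})^2-\tr_{\cH}\rho=2\,(\tr_{\cH}\sqrt{\rho})^2-1$. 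Substituting these into the display above yields $D_s^2(\rho,\rho)=6-\bigl(4(\tr_{\cH}\sqrt{\rho})^2-2\bigr)=8-4(\tr_{\cH}\sqrt{\rho})^2$.

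It then remains to express $\tr_{\cH}\sqrt{\rho}$ through the Bloch vector. Since $\rho=\fel\ler{I_{\C^2}+\bb_\rho\cdot\sigma}$ has eigenvalues $\fel\ler{1\pm\abs{\bb_\rho}_2}$, we get $\tr_{\cH}\sqrt{\rho}=\sqrt{\tfrac{1+\abs{\bb_\rho}_2}{2}}+\sqrt{\tfrac{1-\abs{\bb_\rho}_2}{2}}$, hence $(\tr_{\cH}\sqrt{\rho})^2=1+\sqrt{1-\abs{\bb_\rho}_2^2}$. Plugging this in gives $D_s^2(\rho,\rho)=8-4\ler{1+\sqrt{1-\abs{\bb_\rho}_2^2}}=4\ler{1-\sqrt{1-\abs{\bb_\rho}_2^2}}$, which is exactly \eqref{eq:self-dist-expl}. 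There is no genuine obstacle in this argument; the only points that require a little care are the verification of the Pauli twirl identity and keeping track of how the purity parameter $\abs{\bb_\rho}_2$ enters when one passes from $\rho$ to $\sqrt{\rho}$.
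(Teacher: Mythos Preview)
Your proof is correct but proceeds along a different route than the paper. The paper works directly with the canonical purification: it evaluates $D_s^2(\rho,\rho)=\Bra{\sqrt{\rho}}C_s\Ket{\sqrt{\rho}}$ using the spectral decomposition $C_s=8\cdot\tfrac{1}{2}\sum_{j=1}^3\Ket{\sigma_j}\Bra{\sigma_j}$, writes $\sqrt{\rho}$ explicitly via the Bloch decomposition, and reads off the result as a Hilbert--Schmidt norm of the Pauli part of $\sqrt{\rho}$. You instead stay on $\cH$ and use the trace formula \eqref{eq:self-dist-explicit} together with the Pauli twirl identity $\sum_{j=1}^3\sigma_j X\sigma_j=2(\tr_{\cH}X)I_{\C^2}-X$, which collapses everything to the single scalar $(\tr_{\cH}\sqrt{\rho})^2$. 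Your argument is a bit more algebraic and bypasses the spectral analysis of $C_s$ entirely; the paper's argument is more geometric in the $\hohc$ picture and makes transparent why only the traceless part of $\sqrt{\rho}$ contributes. Both reach the same endpoint via the eigenvalues $\tfrac{1}{2}(1\pm|\bb_\rho|_2)$ of $\rho$.
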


\begin{proof}
The purification of $\rho$ realizes the self-distance:
$D_s^2(\rho,\rho)=\Bra{\sqrt{\rho}} C_s \Ket{\sqrt{\rho}}.$
The spectral decomposition of $C_s$ is 
$$
C_s=8
\ler{\frac{1}{2}\Ket{\sigma_1}\Bra{\sigma_1}+\frac{1}{2}\Ket{\sigma_2}\Bra{\sigma_2}+\frac{1}{2}\Ket{\sigma_3}\Bra{\sigma_3}}
$$
and it follows from the spectral resolution of $\rho$ that
$$
\sqrt{\rho}
=\sqrt{\lambda} \fel\ler{I+\frac{\bb_{\rho}}{\abs{\bb_{\rho}}_2} \cdot \mathbb{\sigma}}
+\sqrt{1-\lambda} \fel\ler{I-\frac{\bb_{\rho}}{\abs{\bb_{\rho}}_2} \cdot \mathbb{\sigma}}
$$
where $\lambda=\fel(1+\abs{\bb_\rho}_2).$
Therefore,
$$
\Bra{\sqrt{\rho}} C_s \Ket{\sqrt{\rho}}=8 \norm{\fel (\sqrt{\lambda}-\sqrt{1-\lambda}) \frac{\bb_{\rho}}{\abs{\bb_{\rho}}_2} \cdot \mathbb{\sigma}}_{HS}^2=
$$
$$
=2\ler{1-2\sqrt{\fel(1+\abs{\bb_\rho}_2)\fel(1-\abs{\bb_\rho}_2)}} \frac{2 \abs{\bb_\rho}_2^2}{\abs{\bb_\rho}_2^2}=4\ler{1-\sqrt{1-\abs{\bb_\rho}_2^2}}.
$$
\end{proof}

With \eqref{eq:qwd-lower} and \eqref{eq:self-dist-expl} in hand, we can give the following upper bound on the right-hand side of \eqref{eq:lhs-rhs}:
$$
D_s^2(\rho,\tau)-D_s^2(\rho,\omega)-D_s^2(\omega,\tau)+ D_s^2(\omega, \omega)
\leq 
$$
$$
\leq 6-2 \bb_{\rho} \cdot \bb_{\tau}-4\abs{\bb_{\rho}-\bb_{\omega}}_2-4\abs{\bb_{\omega}-\bb_{\tau}}_2+4\ler{1-\sqrt{1-\abs{\bb_\omega}_2^2}}
$$
where $D_s^2(\rho,\tau)$ was bounded by the cost of the independent coupling.
\par
The lower bound on the left hand side of \eqref{eq:lhs-rhs} that we can obtain by Proposition \ref{prop:dual} and Proposition \ref{prop:self-dist} reads as follows:
$$
2\ler{D_s^2(\rho,\omega)-\fel \ler{D_s^2(\rho, \rho) + D_s^2(\omega, \omega)}}^{\fel}\ler{D_s^2(\omega,\tau)-\fel \ler{D_s^2(\omega, \omega) + D_s^2(\tau, \tau)}}^{\fel} \geq
$$
$$
\geq
2\ler{4\abs{\bb_{\rho}-\bb_{\omega}}_2-2\ler{1-\sqrt{1-\abs{\bb_\rho}_2^2}} -2\ler{1-\sqrt{1-\abs{\bb_\omega}_2^2}}}^{\fel} \times
$$
$$
\times 
\ler{4\abs{\bb_{\omega}-\bb_{\tau}}_2-2\ler{1-\sqrt{1-\abs{\bb_\omega}_2^2}}-2\ler{1-\sqrt{1-\abs{\bb_\tau}_2^2}}}^{\fel}.
$$
We summarize the above computations in the following corollary.

\begin{corollary} \label{cor:sym-suff}
Let us choose $\rho, \omega, \tau \in \cS(\C^2)$ such that their Bloch vectors satisfy
$$
6-2 \bb_{\rho} \cdot \bb_{\tau}-4\abs{\bb_{\rho}-\bb_{\omega}}_2-4\abs{\bb_{\omega}-\bb_{\tau}}_2+4\ler{1-\sqrt{1-\abs{\bb_\omega}_2^2}} \leq
$$
$$
\leq
2\ler{4\abs{\bb_{\rho}-\bb_{\omega}}_2-2\ler{1-\sqrt{1-\abs{\bb_\rho}_2^2}} -2\ler{1-\sqrt{1-\abs{\bb_\omega}_2^2}}}^{\fel} \times
$$
\be \label{eq:sym-suff-cond}
\times 
\ler{4\abs{\bb_{\omega}-\bb_{\tau}}_2-2\ler{1-\sqrt{1-\abs{\bb_\omega}_2^2}}-2\ler{1-\sqrt{1-\abs{\bb_\tau}_2^2}}}^{\fel}.
\ee
Then the quantum Wasserstein divergence corresponding to the symmetric cost operator satisfies the triangle inequality
$$
d_s(\rho, \tau) \leq d_s(\rho, \omega)+d_s(\omega, \tau).
$$
\end{corollary}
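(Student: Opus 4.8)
The plan is to assemble the corollary from ingredients already in place. By the definition \eqref{eq:mod-qw-dist-def} of $d_s$ together with the fact that the divergence is a well-defined nonnegative real (a consequence of the concavity estimate \eqref{eq:concavity}), the desired triangle inequality \eqref{eq:tri-ineq-spec} is equivalent to \eqref{eq:tri-ineq-spec-equiv}, and squaring out the definition recasts the latter as the inequality \eqref{eq:lhs-rhs}. It therefore suffices to exhibit a quantity $\mathrm{LB}$ bounding the left-hand side of \eqref{eq:lhs-rhs} from below, a quantity $\mathrm{UB}$ bounding its right-hand side from above, and to take $\mathrm{UB} \leq \mathrm{LB}$ as the hypothesis.

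For $\mathrm{UB}$ I would estimate the four terms on the right of \eqref{eq:lhs-rhs} one at a time. Relaxing the infimum defining $D_s^2(\rho,\tau)$ to the independent coupling $\tau \otimes \rho^T$ and using $\sigma_j^2 = I_{\C^2}$ gives $D_s^2(\rho,\tau) \leq 6 - 2\, \bb_\rho \cdot \bb_\tau$; the dual bound of Proposition \ref{prop:dual} gives $-D_s^2(\rho,\omega) \leq -4\abs{\bb_\rho - \bb_\omega}_2$ and $-D_s^2(\omega,\tau) \leq -4\abs{\bb_\omega - \bb_\tau}_2$; and Proposition \ref{prop:self-dist} evaluates $D_s^2(\omega,\omega)$ exactly. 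Summing these four estimates produces precisely the expression on the left-hand side of \eqref{eq:sym-suff-cond} as $\mathrm{UB}$.

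For $\mathrm{LB}$ I would work on the two factors of the left-hand side of \eqref{eq:lhs-rhs} separately. Each factor has the shape $D_s^2(\cdot,\cdot) - \fel\big(D_s^2(\cdot,\cdot) + D_s^2(\cdot,\cdot)\big)$, which is nonnegative by \eqref{eq:concavity}; bounding the cross-term from below by Proposition \ref{prop:dual} and substituting the exact self-distances from Proposition \ref{prop:self-dist} turns each factor into one of the two radicands on the right-hand side of \eqref{eq:sym-suff-cond}. The point demanding care is that $\sqrt{A}\,\sqrt{B} \geq \sqrt{A_0}\,\sqrt{B_0}$ only when the lower bounds $A_0,B_0$ are themselves nonnegative; but this is exactly what makes the right-hand side of \eqref{eq:sym-suff-cond} real, so the statement is only non-vacuous in that regime, and there the comparison is legitimate. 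With $\mathrm{LB}$ and $\mathrm{UB}$ so defined, hypothesis \eqref{eq:sym-suff-cond} reads $\mathrm{UB} \leq \mathrm{LB}$, and then the left-hand side of \eqref{eq:lhs-rhs} is $\geq \mathrm{LB} \geq \mathrm{UB} \geq$ its right-hand side, which is what had to be shown.

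Thus the corollary itself is essentially bookkeeping; the substance sits in Propositions \ref{prop:dual} and \ref{prop:self-dist}, already established. The genuine obstacle --- and the reason this remains a conditional statement rather than an unconditional theorem for $\cH = \C^2$ with the symmetric cost --- is that the sufficient condition \eqref{eq:sym-suff-cond} is opaque: delineating the set of Bloch-vector triplets that satisfy it, and deciding whether the complementary region can be covered by sharper lower bounds (better dual-feasible operators $X$ in Proposition \ref{prop:dual}, or the gap estimates between arithmetic and geometric means alluded to at the start of this section), is where the real difficulty lies.
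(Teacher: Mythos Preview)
Your proposal is correct and follows essentially the same approach as the paper: the corollary is stated in the paper as a summary of the preceding computations, which bound the right-hand side of \eqref{eq:lhs-rhs} from above via the independent coupling and Propositions \ref{prop:dual} and \ref{prop:self-dist}, and bound its left-hand side from below via the same two propositions. Your remark that the nonnegativity of the radicands is implicitly part of the hypothesis (so that the product-of-square-roots step is legitimate) is a point the paper leaves tacit but which is indeed needed.
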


\begin{remark}
There are various ways to obtain easy-to-check examples of states $\rho, \omega, \tau, \in \cS(\C^2)$ that satisfy the assumption of Corollary \ref{cor:sym-suff}. One way is to fix $\omega=\frac{1}{2}I,$ that is, $\bb_{\omega}=0.$ In this case, simple $2$-variable calculus shows that \eqref{eq:sym-suff-cond} is satisfied whenever both $\abs{\bb_{\rho}}_2$ and $\abs{\bb_{\tau}}_2$ are at least $\fel,$ no matter what the angle between $\bb_{\rho}$ and $\bb_{\tau}$ is. Moreover, a numerical test shows if we choose a random triplet of states (the states are chosen independently according to the uniform distribution on the Bloch ball), it will satisfy \eqref{eq:sym-suff-cond} with high probability (around $85\%$). The details of this numerics is presented in the subfolder "Corollary-4" of \cite{num-data}. Note, however, that there are states that do not satisfy \eqref{eq:sym-suff-cond}: the easiest example is $\rho=\omega=\tau=\fel I.$
\end{remark}

\begin{remark}
Note that the estimate \eqref{eq:loewner-bound} may be sharp:
the numerics tells us that it gives the precise transport cost in the special cases
\begin{itemize}
\item $\rho=\fel(I+\fel \sigma_j), \, \omega=\fel(I+\fel \sigma_k), \quad j \neq k$
\item $\rho=\fel(I+\alpha \sigma_j), \, \omega=\fel(I+\beta \sigma_j),  \quad \mathrm{sgn}(\alpha)=-\mathrm{sgn}(\beta).$
\end{itemize}
\end{remark}

\subsection{Lower estimation of the cost function}\label{ss:low-est}
Assuming that $A_j \geq 0$ for every $j \in \{1, \dots, k\},$ we can write the cost operator as the difference of an arithmetic and a geometric mean as follows
$$
C=\sum_{j=1}^k \ler{A_j \otimes I_{\cH^*} - I_{\cH} \otimes A_j^T}^2=
$$
$$
=\sum_{j=1}^k \ler{A_j^2 \otimes I_{\cH^*}+I_{\cH}\otimes (A_j^T)^2 - 2A_j\otimes A_j^T}=
$$
$$
=2\sum_{j=1}^k \ler{\frac{A_j^2 \otimes I_{\cH^*}+I_{\cH}\otimes (A_j^T)^2}{2}-[(A_j^2\otimes I_{\cH^*})(I_{\cH}\otimes (A_j^T)^2)]^{1/2}}.
$$
Introducing the function
$$
f(t)=\frac{1+t}{2}-\sqrt{t},\quad (t\ge 0),
$$
the cost operator can be written as
$$
C=2\sum_{j=1}^k(A_j^2\otimes I_{\cH^*})f(A_j^{-2}\otimes (A_j^T)^2).
$$
$f$ is an operator convex function on $[0,\infty)$ and its tangent line at $s$ is given by
$$
g_s(t)=\frac{1-\sqrt{s}}{2}+\frac{\sqrt{s}-1}{2\sqrt{s}}t.
$$
By the convexity of $f$ we have
$$f(t)\ge g_s(t),\quad (s>0, t\ge 0),$$
and
$$
C=2\sum_{j=1}^k(A_j^2\otimes I_{\cH^*})f(A_j^{-2}\otimes (A_j^T)^2)\ge
$$
$$
\geq
2\sum_{j=1}^k(A_j^2\otimes I_{\cH^*})g_s(A_j^{-2}\otimes (A_j^T)^2)=
$$
$$
=\sum_{j=1}^k\ler{(1-\sqrt{s})A_j^2\otimes I_{\cH^*}+\frac{\sqrt{s}-1}{\sqrt{s}}I_{\cH}\otimes (A_j^T)^2}=
$$
$$
=(1-\sqrt{s})\ler{\sum_{j=1}^k A_j^2}\otimes I_{\cH^*}+\frac{\sqrt{s}-1}{\sqrt{s}}I_{\cH}\otimes \ler{\sum_{j=1}^k(A_j^T)^2}.
$$

We can use this estimation to give a lower bound for the cost function in the state $\Pi$.
$$\tr_{\hohc}[\Pi\, C]\ge \tr_{\hohc}\lesq{\Pi\ler{(1-\sqrt{s})\ler{\sum_{j=1}^k A_j^2}\otimes I_{\cH^*}+\frac{\sqrt{s}-1}{\sqrt{s}}I_{\cH}\otimes \ler{\sum_{j=1}^k(A_j^T)^2}}}=$$
$$
=(1-\sqrt{s})\ler{\tr_{\cH}\lesq{\ler{\sum_{j=1}^k A_j^2}\omega}-\frac{1}{\sqrt{s}}\tr_{\cH^*}\lesq{\ler{\sum_{j=1}^k(A_j^T)^2}\rho^T}}=
$$
$$
=(1-\sqrt{s})\ler{\alpha(\omega)-\frac{1}{\sqrt{s}}\beta(\rho)}
$$
for all $s>0$, where $\alpha(\omega)=\tr_{\cH}\lesq{\ler{\sum_{j=1}^k A_j^2}\omega}$ and 
$$
\beta(\rho)=\tr_{\cH^*}\lesq{\ler{\sum_{j=1}^k(A_j^T)^2}\rho^T}=\tr_{\cH}\lesq{\ler{\sum_{j=1}^k A_j^2}\rho}.
$$
The function
$$
h(s)=(1-\sqrt{s})\ler{\alpha(\omega)-\frac{1}{\sqrt{s}}\beta(\rho)}
$$
has maximum at $s=\frac{\beta(\rho)}{\alpha(\omega)}$. Since
$$
h\ler{\frac{\beta(\rho)}{\alpha(\omega)}}=2\ler{\frac{\alpha(\omega)+\beta(\rho)}{2}-\sqrt{\alpha(\omega)\beta(\rho)}},
$$
we deduce that the cost function can be estimated from below by the Hellinger distance of $\alpha(\omega)$ and $\beta(\rho),$ that is,
$$
\tr_{\hohc}(\Pi\, C)\ge 2\ler{\frac{\alpha(\omega)+\beta(\rho)}{2}-\sqrt{\alpha(\omega)\beta(\rho)}}.
$$

\section{Applications}\label{ss:pros-appl}

The aim of this section is to highlight a direct application of our main result in quantum complexity theory (Subsection \ref{ss:wass-compl}), and to present a brief description of the role of quantum optimal transport distances in mathematical physics (Subsections \ref{ss:mean-field-approx} and \ref{ss:luttinger}). 

\subsection{Wasserstein complexity} \label{ss:wass-compl}

We may define the Wasserstein complexity of quantum channels relying on quantum Wasserstein divergences. Given a finite collections of observables $\cA=\lers{A_1, \dots, A_k}$, let us define the Wasserstein complexity of a channel $\Phi: \, \sh \rightarrow \sh$ by 

\be \label{eq:qw-compl-def}
C_{W}(\Phi):=\max_{\rho \in \sh} d_{\cA}(\rho, \Phi(\rho)). 
\ee

In \cite{liWassersteinComplexityQuantum2022} the quantum Wasserstein complexity is introduced in a similar way, using the quantum generalization of the Hamming-Wasserstein classical metric. A connection between this complexity and the circuit cost of unitary channels is provided, the latter being bounded from below by the former. The argument of this bound relies heavily on the triangle inequality of the Wasserstein distance used. We suggest now that \eqref{eq:qw-compl-def} may have a similar application in light of the triangle inequality proposed in this paper.

To justify our suggestion we now prove a few useful properties of \eqref{eq:qw-compl-def} that is desirable of a complexity quantity.
It follows from the definition, the positive definiteness of $d_{\cA}$ and the lack of self-distance in terms of $d_{\cA}$ that the Wasserstein complexity is faithful: $C_{W}(\Phi)=0$ if and only if $\Phi$ is the identity. 
The Wasserstein complexity is subadditive under concatenation: $C_W(\Phi_2 \circ \Phi_1) \leq C_W(\Phi_2)+C_W(\Phi_1)$. Indeed, 

$$
C_{W}(\Phi_1\circ\Phi_2)=\max_{\rho \in \sh} d_{\cA}(\rho, \Phi_1\circ\Phi_2(\rho))\leq \max_{\rho \in \sh}\left[ d_{\cA}(\rho, \Phi_2(\rho))+d_{\cA}(\Phi_2(\rho), \Phi_1\circ\Phi_2(\rho))\right]
$$

$$
\leq\max_{\rho \in \sh}d_{\cA}(\rho, \Phi_2(\rho))+\max_{\rho \in \sh}d_{\cA}(\Phi_2(\rho), \Phi_1\circ\Phi_2(\rho))
$$

\be
\leq\max_{\rho \in \sh}d_{\cA}(\rho, \Phi_2(\rho))+\max_{\rho \in \sh}d_{\cA}(\rho, \Phi_1(\rho))=C_{W}(\Phi_2)+C_{W}(\Phi_1),
\ee
where the first inequality follows from the triangle inequality for the Wassertein divergence $d_{\cA}$, the second inequality is due maximizing terms of a sum separately, and the last inequality follows from broadening the domain of the second maximum. From this it also follows directly that the Wasserstein complexity is subadditive under tensor products in the following sense: 
$$
C_W(\Phi_1 \otimes \Phi_2) \leq C_W (\Phi_1 \otimes I)+C_W (I \otimes \Phi_2).
$$ 
We note that whether the Wasserstein complexity \eqref{eq:qw-compl-def} is convex or not is an open question.

\subsection{Mean-field approximations of evolution equations} \label{ss:mean-field-approx}
Among the many applications of classical optimal transport, one is particularly important for the development of quantum optimal transport theory. In \cite{Dobrushin}, Dobrushin employed a special approximation based on a transport-related metric, the so-called Kantorovich-Rubinstein metric, to prove the uniqueness of the solution to the Vlasov equation. The Vlasov equations, which describe the limiting situation of weakly interacting particles with a large radius of interaction, are among the most frequently used kinetic equations in statistical mechanics. More recently, Golse, Paul, and Mouhot have extended Dobrushin's approach to the quantum setting. Since Dobrushin used an optimal transport metric to compare $N$-particle densities and their mean-field limits, it was a natural idea to define a transport-related quantity for the purpose of comparing quantum states \cite{GolseMouhotPaul}. The significance of the quantum counterpart of optimal transport (and the Wasserstein metric) in addressing various problems in quantum dynamics became even more evident later through their subsequent papers, see e.g.  \cite{CagliotiGolsePaul,CagliotiGolsePaul-towardsqot,GolseMouhotPaul,GolsePaul-Schrodinger,GolsePaul-lowregularity,GolseTPaul-pseudometrics}.

\subsection{The Luttinger model} \label{ss:luttinger}
The non-equilibrium dymamics of the Luttinger model, describing the low energy physics in Luttinger liquid is a field studied extensively.
The question of how much the time evolved state described by $\rho(t)$ differs from the initial state $\rho(0)$ is often investigated. The paper \cite{DoraBacsi}
study the time evolution of Uhlmann fidelity (or the Loschmidt echo) which measures the overlap between the time evolved and the initial thermal equilibrium states,
is evaluated for arbitrary initial temperatures and quench protocols. Instead of fidelity we can consider other dissimilarity measures between states to describe these effects.
With zero initial temperature, i.e. the initial state is the pure ground state of the Hamiltonian, the Wasserstein distance is easiliy computable and can be used for further
investigations.

\subsection*{Acknowledgment}
We are grateful to the anonymous referees for their insightful suggestions and comments.

\section{Appendix: the code used for the numerical study}

Here we present the \emph{Wolfram Mathematica} \cite{wolf-math} code that we used to obtain the numerical results discussed in Section \ref{s:numerics}.\\

\begin{verbatim}
    RandSelfadjMatrix[l_] := Module[{A, i},
A = RandomVariate[NormalDistribution[], {l, l}] +
I * RandomVariate[NormalDistribution[], {l, l}];
A = A + ConjugateTranspose[A];
A = Chop[A];
A]

RandPositive[l_, r_] := Module[{A},
A = RandomVariate[NormalDistribution[], {l, r}] +
I * RandomVariate[NormalDistribution[], {l, r}];
A = A.ConjugateTranspose[A];
A]

RandState[l_, r_] := Module[{A},
A = RandPositive[l, r];
A = A / Tr[A];
A]

CostfromObservable[Observables_, transpose_] := Sum[
MatrixPower[KroneckerProduct[Observables[[k]],
IdentityMatrix[Length[Observables[[1]]]]] -
KroneckerProduct[IdentityMatrix[Length[Observables[[1]]]], 
If[transpose,Transpose[Observables[[k]]], Observables[[k]]]], 2],
{k, 1, Length[Observables]}]

   SaMCNB[dim_] :=
SaMCNB[dim] = Flatten[{{Table[SparseArray[{{k, k} -> 1,
{dim, dim} -> 0}], {k, 1, dim}]},
Table[Table[SparseArray[{{k, m} -> Sqrt[2] / 2,
{m, k} -> Sqrt[2] / 2, {dim, dim} -> 0}],
{m, k + 1, dim}], {k, 1, dim - 1}], 
Table[Table[SparseArray[{{k, m} -> -I * Sqrt[2] / 2,
{m, k} -> I * Sqrt[2] / 2, {dim, dim} -> 0}],
{m, k + 1, dim}], {k, 1, dim - 1}]}, 2]; 

SaMPB[2] = {{{1, 0}, {0, 1}}, {{0, 1}, {1, 0}},
{{0, -I}, {I, 0}}, {{1, 0}, {0, -1}}};
SaMPB[dim_] := SaMPB[dim] = Flatten[Table[
KroneckerProduct[SaMPB[dim / 2][[j]], SaMPB[2][[k]]],
{j, 1, dim^2 / 4}, {k, 1, 4}], 1];

SaMPNB[dim_] := SaMPNB[dim] = SaMPB[dim] / Sqrt[dim];

QOT[rho_, omega_, C_, dual_, transpose_] 
:= Module[{dim, sol, Pi, x, y, X, Y},
dim = Length[rho];
If[
dual, {x = Table[Symbol["x" <> ToString[n]], {n, dim^2}];
y = Table[Symbol["y" <> ToString[n]], {n, dim^2}];
sol = SemidefiniteOptimization[-Tr[(y.SaMCNB[dim]).omega 
+ (x.SaMCNB[dim]).rho],
{VectorGreaterEqual[{C - KroneckerProduct[(y.SaMCNB[dim]),
IdentityMatrix[dim]] -
KroneckerProduct[IdentityMatrix[dim], If[transpose, Transpose[
(x.SaMCNB[dim])], x.SaMCNB[dim]]], 0}, {"SemidefiniteCone", dim^2}]},
Flatten[{x, y}]];
X = Chop[sol[[1 ;; dim^2, 2]].SaMCNB[dim], 10^-3];
Y = Chop[sol[[dim^2 + 1 ;; 2 * dim^2, 2]].SaMCNB[dim], 10^-3];
Chop[Sqrt[Tr[X.omega + Y.rho]], 10^-3],
X // MatrixForm, Y // MatrixForm},
{x = Table[Symbol["x" <> ToString[n]], {n, dim^4}];
sol = SemidefiniteOptimization[
Chop[Simplify[Tr[(x.SaMCNB[dim^2]).C]], 10^-3],
{ResourceFunction["MatrixPartialTrace"]
[x.SaMCNB[dim^2], 2, {dim, dim}] -> omega,
ResourceFunction["MatrixPartialTrace"]
[x.SaMCNB[dim^2], 1, {dim, dim}] ->
If[transpose, Transpose[rho], rho],
VectorGreaterEqual[{x.SaMCNB[dim^2], 0}, 
{"SemidefiniteCone", dim^2}] },x];
Pi = Chop[Sum[sol[[n, 2]]SaMCNB[dim^2][[n]], {n, 1, dim^4}], 10^-3];
Chop[Sqrt[Tr[C.Pi]], 10^-3], Pi // MatrixForm}]]

ModQOT [rho_, omega_, cost_, dual_, transpose_] 
:= Sqrt[QOT[rho, omega, cost, dual, transpose][[1]]^2 -
(QOT[rho, rho, cost, dual, transpose][[1]]^2 
+ QOT[omega, omega, cost, dual, transpose][[1]]^2) / 2]

TriIneq[rho_, omega_, tau_, C_, dual_, transpose_] 
:= ModQOT[rho, omega, C, dual, transpose] +
ModQOT[omega, tau, C, dual, transpose] 
- ModQOT[rho, tau, C, dual, transpose]

\end{verbatim}

\begin{small}
\bibliographystyle{plainurl}  
\bibliography{references.bib}
\end{small}

\end{document}